\def\Let@{\def\\{\notag\math@cr}}
\newtheorem{theorem}{Theorem}
\begin{document}

\title{Recurrence method in Non-Hermitian Systems}
\author{Haoyan Chen}
\email{chenhaoyan@stu.pku.edu.cn}
\affiliation{International Center for Quantum Materials, School of Physics, Peking University, Beijing 100871, China}
\author{Yi Zhang}
\email{frankzhangyi@gmail.com}
\affiliation{International Center for Quantum Materials, School of Physics, Peking University, Beijing 100871, China}
\date{\today}

\begin{abstract}
We propose a novel and systematic recurrence method for the energy spectra of non-Hermitian systems under open boundary conditions based on the recurrence relations of their characteristic polynomials. Our formalism exhibits better accuracy and performance on multi-band non-Hermitian systems than numerical diagonalization or the non-Bloch band theory. It also provides a targeted and efficient formulation for the non-Hermitian edge spectra. As demonstrations, we derive general expressions for both the bulk and edge spectra of multi-band non-Hermitian models with nearest-neighbor hopping and under open boundary conditions, such as the non-Hermitian Su-Schrieffer-Heeger and Rice-Mele models and the non-Hermitian Hofstadter butterfly - 2D lattice models in the presence of non-reciprocity and perpendicular magnetic fields, which is only made possible by the significantly lower complexity of the recurrence method. In addition, we use the recurrence method to study non-Hermitian edge physics, including the size-parity effect and the stability of the topological edge modes against boundary perturbations. Our recurrence method offers a novel and favorable formalism to the intriguing physics of non-Hermitian systems under open boundary conditions. 
\end{abstract}

\maketitle

\section{Introduction} 

Non-Hermiticity is a significant factor that can not be overlooked in both classical and quantum systems~\cite{Hatano1996,Hatano1997,Hatano1998,Bender1998,Ruschhaupt2005,Bender2007,Hu2011,Regensburger2012,Yoshida2019,Schindler2011,Ezawa2019,Peng2014,Yuto2020,Hu2023}. In these systems, non-Hermiticity gives rise to various unconventional phenomena such as non-Hermitian skin effect~\cite{Kohei2020,Zhang2022,Song2019,Longhi2021,Claes2021,Yi2020,Zhang2020,Longhi2019,Shinsei2023,Li2023,Schindler2021,Marko2022,Wang2024,Li2020,Brandenbourger2019,Borgnia2020}, non-Bloch parity-time symmetry~\cite{Longhi2019Opt,Nie2024,Xiao2021,Hu2024}, non-Hermitian topology~\cite{Kohei2019symmetry,Gong2018,Okuma2023,Kawabata2022,Kunst2019,Xiao2022,Scheibner2020,Wanjura2021,Zhou2019,Liu2019,Yao2018,Yao2018Chern}, generalized Brillouin zone~\cite{Fu2023,Yao2018,Zhesen2020,Kazuki2019,Chen2023}, and exceptional points~\cite{Berry2004,Heiss2004,Bergholtz2021,Regensburger2013,Heiss2012,Dembowski2001}. One intriguing property of non-Hermitian systems is the breakdown of spectral stability under different boundary conditions~\cite{Xiong2018,Guo2021,Mao2021,Bottcher2005,Okuma2021}. Specifically, it is widely believed that the energy levels of Hermitian systems with an open boundary condition (OBC) can be reproduced mainly by the Bloch band theory under periodic boundary conditions (PBCs), even if the translation symmetry is broken. In non-Hermitian systems, however, the energy spectra under OBC change drastically and collapse into open arcs encircled by the loop-shape periodic boundary spectra~\cite{Okuma2020, Yao2018, Zhang2020, Wu2022}. It is widely known that these non-Hermitian OBC spectra can be obtained by the non-Bloch theory, which extends the Brillouin zone to the generalized Brillouin zone (GBZ) in non-Hermitian systems~\cite{Yao2018, Kazuki2019, Zhesen2020, Zhang2020}. Unfortunately, the determination of GBZ is somewhat challenging; therefore, it may be necessary to use the auxiliary GBZ technique~\cite{Zhesen2020}, which still involves additional concepts and may not apply to complicated multi-band non-Hermitian systems, such as non-Hermitian Hofstadter butterflies~\cite{Zhesen2020}. Furthermore, non-Bloch theory primarily emphasizes the bulk OBC spectra relevant to the GBZ, and specialized tools for edge spectra of non-Hermitian systems, intimately connected with topological properties, are still largely lacking. 

To circumvent the limitations of the GBZs and the non-Bloch band theory, we propose in this paper an alternative, general, efficient, and accurate method based on the recurrence relations of the eigenvalue equations in non-Hermitian systems under OBC. First, we conclude that the characteristic polynomial $D_N(E)=\det(E-H_N)$ of any non-Hermitian tight-binding Hamiltonian under OBC satisfies a recurrence relation with constant coefficients independent of the matrix size $N$. Then, we can represent $D_N(E)$ with the solutions $z_i(E)$ of the corresponding characteristic equation. The main result is that the OBC spectra include the energies such that (i) two maximal complex solutions $z_i$ and $z_j$ have the same modulus, or (ii) when representing $D_N$ by $z_i$, the maximal complex solution term is zero. We will argue that the above conditions are related to the bulk and edge parts of OBC spectra, respectively. 

As an application of the recurrence method, we derive a general expression for the bulk OBC spectra of non-Hermitian multi-band Hamiltonians with nearest-neighbor hoppings. This expression has the form of polynomial equations that allow us to analytically or numerically solve the OBC spectra without either diagonalization or determination of the GBZ. Thus, the recurrence method avoids typical accuracy problems in diagonalizing non-Hermitian matrices and significantly reduces the computational complexity. Furthermore, our formalism straightforwardly determines non-Hermitian Hamiltonians' edge spectra, which are unavailable from the conventional non-Bloch band theory. Based on this framework, we can analyze the effects of the overall number of sites (e.g., the size-parity effect)~\cite{Qi2023,Longhi2019topo,Chang2023} and the boundary perturbations~\cite{Li2023scale,Nakamura2023,Schindler2023,Landi2022} on the edge spectra. Such bulk and edge spectra offers direct information on physical properties such as the PT-symmetry breaking, topological characters, and bulk-boundary correspondence of the target non-Hermitian systems under OBCs. For example, we study the non-Hermitian Hofstadter model in two dimensions as an interplay between the non-reciprocity and magnetic field~\cite{Shao2022}. Our results provide robust evidence for the complex semiclassical theory~\cite{Yang2024} and topological characterizations of non-Hermitian systems~\cite{Kohei2019symmetry, Gong2018, Okuma2023, Kawabata2022, Kunst2019, Xiao2022, Scheibner2020, Wanjura2021, Zhou2019, Liu2019, Yao2018,Yao2018Chern}. Generally, our work offers a novel technique and perspective for non-Hermitian systems' OBC spectra and physics. 

We organize the rest of the paper as follows: In Sec.~\ref{sec:recurrence_method}, we discuss the recurrence relations of the characteristic polynomials for non-Hermitian tight-binding Hamiltonians under OBC. Based on a mathematical theorem for the solutions of the polynomials satisfying the recurrence relations, we provide a general formulation for the OBC spectra of non-Hermitian systems, which contains one part corresponding to the bulk spectra and the other corresponding to the edge spectra, respectively. In Sec.~\ref{sec:k_periodic}, we apply the recurrence method for the bulk spectra of non-Hermitian multi-band Hamiltonians with nearest-neighbor hoppings. We establish the general expressions on the bulk spectra for systems with an arbitrary number of sublattices and present the non-Hermitian Rice-Mele model as an example. In Sec.~\ref{sec:edge_spectra}, we demonstrate the advantages of the recurrence method over the edge spectra, continuing on the models in the previous section. Further, we investigate the effect of overall system sizes and boundary perturbations on edge spectra as modifications to the initial conditions of the recurrence relations, both of which are not accessible from the previous non-Bloch band theory. In Sec.~\ref{sec:nH_Hofstadter}, we apply our formalism non-Hermitian Hofstadter models on a triangular lattice. We explore the impact of non-reciprocity on the spectral patterns and validate the complex semiclassical theory in non-Hermitian systems. Besides, we examine the edge modes of the non-Hermitian Hofstadter models for their topological characters. We conclude our studies in Sec.~\ref{sec:conclusion} and discuss future outlooks and prospects.

\section{Recurrence Method in non-Hermitian systems} \label{sec:recurrence_method}

We first introduce a recurrence method for the spectra of generic non-Hermitian lattice models under OBCs. Consider a 1D non-Hermitian Hamiltonian with $N$ unit cells: 
\begin{equation}
    H_N=\sum\limits_{1\leq i,j\leq N;\alpha,\beta}t_{\alpha\beta}c_{i\alpha}^\dagger c_{j\beta}, \label{eq:HN}
\end{equation}
where $i,j$ represent different unit cells, and $\alpha,\beta \in [1, k]$ denote intracell sublattice degrees of freedom. The OBC spectrum of Eq.~\eqref{eq:HN} is determined by solutions of the following characteristic polynomial:
\begin{equation}
    D_{kN}(E)=\det(I_{kN} E-H_{kN})=0, \label{eq:DN}
\end{equation}
which is a polynomial of degree $kN$ with respect to $E$ for a $k$-band non-Hermitian Hamiltonian under OBC. Generally, such polynomials satisfy the following recurrence relation~\cite{zakrajvsek2004pascal,bacher2002determinants,supp}:
\begin{equation}
    D_{kN}(E)=\sum\limits_{m=1}^{s}C_m(\{t_{\alpha\beta}\},E)D_{k(N-m)}(E), \label{eq:recursion}
\end{equation}
where $s=\binom{l+r}{l}=\frac{(l+r)!}{l!r!}$ is the order of the recurrence relation, and $l$ ($r$) represents the range of hopping to the left (right). The coefficients $C_m(\{t_{\alpha\beta}\},E)$ are homogeneous functions of degree $km$, i.e., $C_m(\{xt_{\alpha\beta}\},xE)=x^{km}C_m(\{t_{\alpha\beta}\},E)$. 

Associated with the recurrence relation in Eq.~\eqref{eq:recursion}, we can define the following characteristic equation: 
\begin{equation}
    z^s=\sum\limits_{m=1}^{s}C_m(\{t_{\alpha\beta}\},E)z^{s-m}, \label{eq:char_eq_of_recurrence}
\end{equation}
whose solutions, $z_i(E), i=1,2,\cdots,s$ re-express the characteristic polynomial in Eq.~\eqref{eq:DN} on the spectrum as: 
\begin{equation}
    D_{kN}(E)=\sum\limits_{i=1}^sd_i(E)z_i(E)^N, \label{eq:DN_z}
\end{equation}
where $d_i(E)$ are determined by the initial conditions $D_0(E)=1,D_k(E),\cdots,D_{(s-1)k}(E)$ of the recurrence relation in Eq.~\eqref{eq:recursion}, which are the characteristic polynomials in Eq.~\eqref{eq:DN} corresponding to Hamiltonian matrices of size $0, k, 2k, \cdots, (s-1)k$, respectively. The characteristic polynomial of the null matrix (size zero) is conventionally defined as $1$, while the others can be computed directly through the standard determinant calculations in general, which are relatively straightforward due to their (initially) limited sizes. The OBC spectrum in the thermodynamic limit $N\rightarrow\infty$ is comprised of values of $E$ satisfying either of the following two conditions~\cite{Beraha1975,supp}: 
\begin{enumerate}[(i).]
    \item The two solutions with maximal absolute values have the same moduli
    $|z_i(E)|=|z_j(E)|\geq |z_{m\neq i,j}(E)|$;
    \item $d_i(E)=0$ in Eq.~\eqref{eq:DN_z} for the solution with the maximal absolute value $|z_i(E)|>|z_{j\neq i}(E)|$.
\end{enumerate} 

Note that condition (i) contains an infinite number of $E$ in a continuous spectrum of solutions and a similar form with the GBZ condition $|\beta_p(E)|=|\beta_{p+1}(E)|$ in the non-Bloch band theory, thus corresponding to the continuous bulk OBC spectra of the non-Hermitian systems. For example, the recurrence relation for the Hatano-Nelson model:
\begin{equation}
H_{\text{HN}}=\sum_i[(t+\gamma)c_{i+1}^\dagger c_i+(t-\gamma)c_{i}^\dagger c_{i+1}],    
\end{equation}
is:
\begin{equation}
    D_N^{\text{HN}}(E)=ED_{N-1}^{\text{HN}}(E)-(t^2-\gamma^2)D_{N-2}^{\text{HN}}(E),
\end{equation}
whose characteristic equation:
\begin{equation}
    z^2=Ez-(t^2-\gamma^2),
\end{equation}
has a degree $s=2$. Condition (i) $|z_1|=|z_2|=\sqrt{t^2-\gamma^2}$ offers a continuous set of solutions and differs from the GBZ condition $|\beta_1|=|\beta_2|=\sqrt{(t+\gamma)/(t-\gamma)}$ only by a constant. As a result, the spectrum $E=2\sqrt{t^2-\gamma^2}\cos\theta$ is consistent with the non-Bloch band theory~\cite{Kohei2020}.

On the other hand, condition (ii),  combined with the initial conditions of Eq.~\eqref{eq:DN_z}, gives several polynomial equations about $E$ and thus a discrete set of solutions. These polynomial equations are generally quite different from the non-polynomial form of condition (i) and generally insensitive of the system size $N$. Hence, we expect condition (ii) to contain the edge spectra of non-Hermitian systems under OBCs, which is beyond the scope of the non-Bloch band theory and the GBZ with only information of the bulk spectra~\cite{Kazuki2019, Zhesen2020}. For example, the spectrum equation of the Hatano-Nelson model following condition (ii) and approach elaborated in subsequent sections is:
\begin{equation}
    t^2-\gamma^2=0,
\end{equation}
which yields zero solutions, thus edge modes, which is consistent with the absence of edge modes in the single-band Hatano-Nelson model and, more generally, condition (ii) for the edge spectra. 

The recurrence method is remarkably useful for non-Hermitian models, especially models with large unit cells ($k\gg 1$) that are exceedingly complex from a non-Bloch theory perspective. As we will show, even for a large unit cell, the recurrence method gives an analytic expression for the OBC spectra of 1D non-Hermitian nearest-neighbor tight-binding models in the thermodynamic limit $N\rightarrow\infty$. Computationally, on the other hand, the recurrence method transfers the OBC spectrum problem to the problem of solving polynomials of degree $k$, which reduces the time complexity to $O(Nk^2)$ with the Durand-Kerner algorithm~\cite{kerner1966,supp}. 

Aside from the bulk band, edge states may emerge as isolated levels in the spectrum under OBC, intimately related to the topological invariants of the non-Hermitian systems~\cite{Yao2018,Yao2018Chern}. Based upon condition (ii), the recurrence method provides targeted computations of the edge spectra in general non-Hermitian models. By restricting the computation to the isolated solutions of $d_i(E)=0$, the recurrence method is much more efficient than the previous method based upon the edge matrix~\cite{Xiong2024}. It also provides an effective means to study edge physics~\cite{Li2023scale,Nakamura2023,Schindler2023,Landi2022}, i.e., how local perturbations at the OBC edges impact the edge spectrum, e.g., when establishing the stability of the topological edge modes, by tracking the changes to the $d_i(E)$ from the recurrence's initial conditions. 

\section{Analytical expression of OBC bulk spectrum} \label{sec:k_periodic}

\subsection{Periodic nearest-neighbor-hopping non-Hermitian models}

As an explicit demonstration of the recurrence method, we start with the non-Hermitian models with $k$-periodic onsite potential and nearest-neighbor (NN) hopping under OBCs: 
\begin{equation}
    H_1=\sum\limits_{l=1}^{N-1}(t_{l}c_{l+1}^\dagger c_l+t_{l}'c_l^\dagger c_{l+1}+V_ln_l), \label{eq:periodic_Hatano_Nelson}
\end{equation}
where the NN hoppings $t_{l}=t_{l+k}$, $t_{l}'=t_{l+k}'$ and the onsite potentials $V_l=V_{l+k}$ both have a real-space period of $k$. The corresponding recurrence relation is:
\begin{equation}
    D_N=(E-V_N)D_{N-1}-t_{N-1}t_{N-1}'D_{N-2}.
\end{equation}

Note that in order to implement the recurrence method, the coefficients of the recurrence relation in Eq.~\eqref{eq:recursion} should be independent of $N$. To address this issue, we need to derive the following size-independent recurrence relation by using the transfer matrix method~\cite{supp}:
\begin{equation}
    D_{kN}=p_k(E)D_{k(N-1)}-tD_{k(N-2)}, \label{eq:recurrence_method}
\end{equation}
where $t\equiv\prod_{i=1}^kt_{i}t_{i}'$ and $p_k(E)$ is a polynomial of $E$ with degree $k$ \cite{supp}:
\begin{equation}
    p_k(E)=\Delta_{1,k}-t_{k}t_{k}'\Delta_{2,k-1}, \label{eq:pk}
\end{equation}
where $\Delta_{i,j}$ for $1\leq i\leq j \leq k$ in Eq. \eqref{eq:pk} are given by
\begin{equation}
    \Delta_{i,j}=\det\begin{pmatrix}
        E-V_i & 1 & &  \\
        t_{i}t_{i}' & E-V_{i+1} & 1 &  \\
        & \ddots & \ddots & 1  \\
        & &t_{j-1}t_{j-1}' & E-V_j \\
    \end{pmatrix}, \label{eq:Delta}
\end{equation}
and $\Delta_{i,i-1}=1$, otherwise $\Delta_{i,j}=0$. 

Since $\Delta$ is a sparse matrix and satisfies $\Delta_{i,j}=(E-V_j)\Delta_{i,j-1}-t_{j-1}t_{j-1}'\Delta_{i,j-2}$, we can derive the coefficients of $p_k(E)$ recursively when the lattice period $k$ is large \cite{supp}. The recurrence relation in Eq.~\eqref{eq:recurrence_method} is characterized by the equation:
\begin{equation}
    z^2=p_k(E)z-t, \label{eq:characteristic_eq_recurrence_relation}
\end{equation}
with solutions $z_{\pm}=[p_k(E)\pm\sqrt{p_k(E)^2-4t}]/2$. From Eq.~\eqref{eq:DN_z}, the determinant $D_{kN}$ is:
\begin{equation}
    D_{kN}=d_+(E)z_+(E)^N+d_-(E)z_-(E)^N. \label{eq:DkN}
\end{equation}

Using the condition (i) $|z_+|=|z_-|$ in the recurrence method, we find that the OBC bulk spectra of the periodic NN hopping non-Hermitian models in Eq.~\eqref{eq:periodic_Hatano_Nelson} are given by:
\begin{equation}
    p_k(E)=2\sqrt{t}\cos\theta, \quad  \theta\in[0,\pi], \label{eq:pk_tcos}
\end{equation}
which is an analytical expression for the OBC spectra of multiple-band Hamiltonians in Eq.~\eqref{eq:periodic_Hatano_Nelson} through the recurrence method. Therefore, our formalism transforms the OBC spectra problems of Eq.~\eqref{eq:periodic_Hatano_Nelson} to a sequence of polynomial equations in Eq.~\eqref{eq:pk_tcos}. For a lattice period or sublattice size $k$ that is arbitrarily large, we can solve Eq.~\eqref{eq:pk_tcos} numerically through an iterative method called the Durand-Kerner method~\cite{kerner1966,supp} with manageable cost $\sim O(Nk^2)$, advantageous over the non-Bloch band theory. 

More specifically, consider the non-Hermitian Su-Schrieffer-Heeger (SSH) model with $k=2$ bands~\cite{Yao2018,Lieu2018}:  
\begin{align}
    H_{\text{SSH}}=&\sum\limits_i[(u_1+\gamma/2)c_{iA}^\dagger c_{iB}+(u_1-\gamma/2)c_{iB}^\dagger c_{iA}\\
    &+u_2c_{i+1,A}^\dagger c_{iB}+u_2c_{iB}^\dagger c_{i+1,A}], \label{eq:nH_ssh}
\end{align}
which is a special case of Eq.~\eqref{eq:periodic_Hatano_Nelson} with $t_{1}=u_1-\gamma/2$, $t_{1}'=u_1+\gamma/2$, $t_{2}=t_{2}'=u_2$, and $V_1=V_2=0$. 

According to the recurrence method, we obtain from Eq.~\eqref{eq:pk} and Eq.~\eqref{eq:Delta}: 
\begin{equation}
    p_{\text{SSH}}(E)=E^2-u_1^2-u_2^2+\frac{\gamma^2}{4}, t_{\text{SSH}}=u_1^2u_2^2-\frac{\gamma^2 u_2^2}{4}.  \label{eq:p_ssh_t_ssh}
\end{equation}

Consequently, the bulk spectrum of the non-Hermitian SSH model under OBC is analytically obtainable from Eq.~\eqref{eq:pk_tcos}: 
\begin{equation}
    E_{\text{SSH}}=\pm\sqrt{u_1^2+u_2^2-\gamma^2/4+2\sqrt{(u_1^2-\gamma^2/4)u_2^2}\cos\theta}. \label{eq:nH_ssh_RM} 
\end{equation}
 
For comparison, the non-Bloch band theory replaces the momentum $k$ with the non-Bloch parameter $\beta \equiv e^{ik}$ and solves the eigenvalue equation $\det[H(\beta)-E]=0$ instead~\cite{Kazuki2019}. Subsequently, its roots $\beta_1(E)$ and $\beta_2(E)$ yield the GBZ conditions $|\beta_1|=|\beta_2|=\sqrt{|u_1-\gamma/2|/|u_1+\gamma/2|}$, and the corresponding OBC spectrum takes the form~\cite{Yao2018}:
\begin{align}
    E^2&=u_1^2+u_2^2-\gamma^2/4+u_2\sqrt{|u_1^2-\gamma^2/4|}[\text{sgn}(u_1+\gamma/2)e^{i\theta} \\
    &+\text{sgn}(u_1-\gamma/2)e^{-i\theta}
    ], \quad \theta\in[0,2\pi], \label{eq:nH_ssh_GBZ} 
\end{align}
indeed, fully consistent with the results in Eq.~\eqref{eq:nH_ssh_RM} from our recurrence method. 

\subsection{Non-Hermitian Rice-Mele model} \label{subsec:nH_RM}

As another example of the applicability of the recurrence method, we consider the non-Hermitian Rice-Mele model~\cite{Rice1982, Wang2018}: 
\begin{align}
    H_{\text{nRM}}&=\sum\limits_i(w_1b_i^\dagger a_{i+1}+w_2a_{i+1}^\dagger b_i)+\sum\limits_i(v_1a_{i}^\dagger b_i+v_2b_i^\dagger a_i) \\
    &+\sum\limits_i V(a_i^\dagger a_i-b_i^\dagger b_i), \label{eq:nH_Rice_Mele}
\end{align}
where $w_{1/2}$ denote the intercell hoppings between the adjacent unit cells, $v_{1/2}$ denote the intracell hoppings between sites A and B in the same unit cell, and $V$ is a staggered onsite potential, as shown in Fig.~\ref{fig:Rice_Mele}(a). The $V=0$ case recovers the non-Hermitian SSH model in Eq.~\eqref{eq:nH_ssh} with vanishing onsite potentials. The asymmetric intracell/intercell hoppings, e.g., $v_1=v-\gamma$ versus $v_2=v+\gamma$ with a finite $\gamma$, introduce non-Hermiticity into the model and can be realized in experiments by reservoir engineering in the ultra-cold atom systems and imaginary gauge field in coupled optical micro-ring resonators~\cite{Gong2018, Liu2020Mott, Longhi2015}. 

Following the recurrence method beginning in Eq.~\eqref{eq:recurrence_method}, the polynomial $p_{\text{nRM}}(E)$ and $t_{\text{nRM}}$ in Eq.~\eqref{eq:nH_Rice_Mele} for the non-Hermitian Rice-Mele model are: 
\begin{subequations}
\begin{eqnarray}
    p_{\text{nRM}}(E)&=&(E-V)(E+V)-v_1v_2-w_1w_2, \\
    t_{\text{nRM}}&=&v_1v_2w_1w_2.
\end{eqnarray}
\end{subequations}

Thus, following Eq.~\eqref{eq:pk_tcos}, the bulk OBC spectrum of the non-Hermitian Rice-Mele model \eqref{eq:nH_Rice_Mele} is: 
\begin{equation}
E_{\text{nRM}}=\pm\sqrt{V^2+v_1v_2+w_1w_2+2\sqrt{v_1v_2w_1w_2}\cos\theta}, \label{eq:RM_bulk_OBC}
\end{equation}
which retains the symmetric pattern of the Hermitian Rice-Mele model spectrum and is consistent with the results of the non-Bloch band theory~\cite{Mandal2024}. 

In particular, the OBC spectrum in Eq. \eqref{eq:RM_bulk_OBC} becomes real when and only when:  
\begin{subequations}
\begin{eqnarray}
    V^2+v_1v_2+w_1w_2&>&2\sqrt{v_1v_2w_1w_2}, \\
    v_1v_2w_1w_2&>&0,
\end{eqnarray}
\end{subequations}
which are satisfied in two scenarios: first, both $v_1v_2, w_1w_2>0$ for an arbitrary onsite potential $V$, as shown in Fig.~\ref{fig:Rice_Mele}(b-c); the other is both $v_1v_2, w_1w_2<0$ and $|V|>\sqrt{-v_1v_2}+\sqrt{-w_1w_2}$. The latter case introduces possibilities of the non-Bloch PT symmetry breaking~\cite{Longhi2019Opt,Nie2024,Xiao2021,Hu2024} as the onsite potential $V$ crosses and falls between the transition points:
\begin{equation}
    V_c=\pm(\sqrt{-v_1v_2}+\sqrt{-w_1w_2}). 
\end{equation}
For instance, $V_c=\pm1.4$ for parameters $v_1=-0.8$, $v_2=0.2$, $w_1=-1$, and $w_2=1$ so that $v_1v_2, w_1w_2<0$, and the OBC spectrum of the non-Hermitian Rice-Mele model in Eq.~\eqref{eq:nH_Rice_Mele} remains real at $V=1.5$ [Fig.~\ref{fig:Rice_Mele}(g)] yet undergoes an non-Bloch PT transition at $V=1.4$ [Fig.~\ref{fig:Rice_Mele}(f)] and develops complex values afterwards [Fig.~\ref{fig:Rice_Mele}(d-e)]. 

\begin{figure}[!ht]
    \centering
    \includegraphics[width=0.98\linewidth]{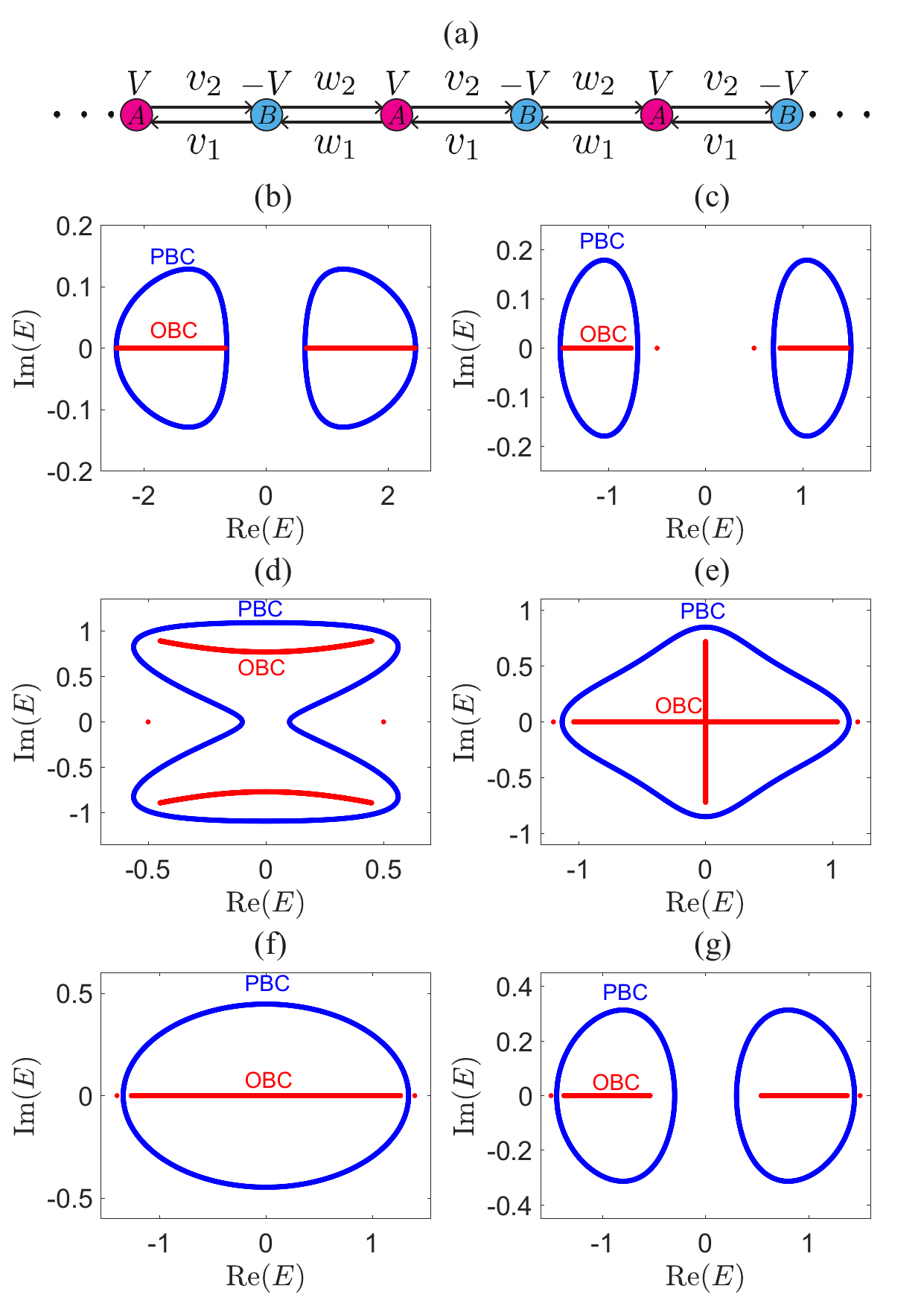}
    \caption{We compare the OBC (red curves and dots) and PBC (blue curves) spectra of the non-Hermitian Rice-Mele model in Eq.~\eqref{eq:nH_Rice_Mele} with two sublattices A and B and asymmetric hoppings $v_1\neq v_2^*$ or $w_1\neq w_2^*$, illustrated in (a). The OBC spectra are fully real when both $v_1v_2, w_1w_2>0$, irrespective of $V$: (b) $v_1=2$, $v_2=1$, $w_1=1.2$, $w_2=0.8$, and $V=0.5$; (c) $v_1=0.8$, $v_2=0.2$, $w_1=1.2$, $w_2=0.8$, and $V=0.5$; (d) $v_1=0.8$, $v_2=0.2$, $w_1=-1$, $w_2=1$, and $V=0.5$. However, when both $v_1v_2, w_1w_2<0$, a non-Bloch PT symmetry breaking may occur: (e-g) $v_1=-0.8$, $v_2=0.2$, $w_1=-1$, and $w_2=1$, thus the transition points are at $V_c=1.4$. The spectrum is complex for (d) and (e) $V=1.2$ at a critical point at (f) $V=1.4$ and becomes fully real for (g) $V=1.5$. Note that $|v_1v_2|>|w_1w_2|$ and thus no edge state exists in (b), while $|v_1v_2|<|w_1w_2|$ and two isolated edge states appear at $E=\pm V$ (red dots) in (c-g). }
    \label{fig:Rice_Mele}
\end{figure}

\section{OBC edge spectrum: effects of system size and perturbations} \label{sec:edge_spectra}

\subsection{Edge spectra by recurrence method} \label{sec:subedge_spectra}

The recurrence method also provides an effective and targeted approach for the edge part of the OBC spectra in non-Hermitian systems. For example, for the edge spectrum of the nearest-neighbor-hopping model in Eq.~\eqref{eq:periodic_Hatano_Nelson}, we consider the following initial conditions of Eq.~\eqref{eq:DkN}: 
\begin{eqnarray}
    D_0(E)&=&d_+(E)+d_-(E)=1, \\\nonumber
    D_k(E)&=&d_+(E)z_+(E)+d_-(E)z_-(E)=\Delta_{1,k}, \label{eq:D0_Dk}
\end{eqnarray} 
where $\Delta_{1,k}$ is given in Eq. \eqref{eq:Delta}. 

Using condition (ii) that $d_i(E)=0$ for $|z_i|>|z_{j\neq i}|$ in Sec.~\ref{sec:recurrence_method} along with Eq.~\eqref{eq:characteristic_eq_recurrence_relation}, we find that the OBC edge spectrum is within~\cite{supp}:
\begin{equation}
    D_k(E)^2-D_k(E)p_k(E)+t=0, \label{eq:Dk_pk_t}
\end{equation}
along with either of the following scenarios: 
\begin{eqnarray}
    D_k(E)&=&z_+(E), \quad |z_+(E)|<|z_-(E)|; \nonumber\\
    D_k(E)&=&z_-(E), \quad |z_-(E)|<|z_+(E)|. \label{eq:edge_OBC_condition}
\end{eqnarray}

Previously, the edge spectra of non-Hermitian systems under OBCs are obtainable via the edge matrix~\cite{Xiong2024}. Consider the $m$-band non-Hermitian Hamiltonian $H(\beta)=\sum_{i=-r}^l h_i\beta^i$ after the substitution $e^{ik}\rightarrow\beta$, the block element of the $mr\times mr$ edge matrix $M_{\text{edge}}$ is defined as~\cite{Xiong2024}: 
\begin{equation}
    [M_{\text{edge}}(E)]_{ab}=\frac{1}{2\pi i}\oint_{C}\beta^{a-b}[H(\beta)-E\mathbbm{1}_m]^{-1}\frac{d\beta}{\beta},
\end{equation}
where $a,b=1,2,\cdots,r$, and the counterclockwise integral contour encloses the origin and the first $p$ solutions $\beta_1(E), \cdots, \beta_p(E)$ of $\det(H(\beta)-E)=0$. The isolated edge modes are determined by the relation $\det M_{\text{edge}}=0$~\cite{delvaux2012}. Compared to the edge matrix, the recurrence method is more efficient, as it restricts the edge-spectrum problem to the few zero-coefficient solutions of $d_i(E)=0$, while through the edge matrix, $\det M_{\text{edge}}(E)\neq 0$ - a complicated and costly matrix equation, especially when $mr$ is large - has to be verified for all $E$ outside the OBC spectra through the edge matrix~\cite{Xiong2024}. 

For example, given the non-Hermitian SSH model in Eq.~\eqref{eq:nH_ssh}, $D_2(E)=E^2-u_1^2+\gamma^2/4$, and Eq.~\eqref{eq:Dk_pk_t} simply becomes $E^2u_2^2=0$. Therefore, the recurrence method directly suggests that the edge spectra of non-Hermitian SSH models are zero modes, which emerge in the topological non-trivial regime $|u_1^2-\gamma^2/4|<u_2^2$~\cite{Xiong2018,Yao2018}, following the conditions in Eq.~\eqref{eq:edge_OBC_condition}. For comparison, the edge matrix of the non-Hermitian SSH model has a zero determinant at $E=0$ when and only when $|u_1^2-\gamma^2/4|<u_2^2$~\cite{Xiong2024}, in full alignment with our results. However, the knowledge $E=0$ has to be either presumed or concluded from various trial $E$ values. 

Likewise, the initial conditions $D_0$, $D_2$ of the non-Hermitian Rice-Mele models in Eq.~\eqref{eq:nH_Rice_Mele} are given by: 
\begin{equation}
    D_2^{\text{nRM}}(E)=E^2-V^2-v_1v_2, \label{eq:Rice_Mele_D2}
\end{equation}
according to our recurrence method. Eq.~\eqref{eq:Dk_pk_t} describing the the OBC edge spectrum now becomes: 
\begin{equation}
E^2w_1w_2-V^2w_1w_2=0,
\end{equation}
therefore $E=\pm V$. Combined with the conditions in Eq.~\eqref{eq:edge_OBC_condition}, such edge modes exist under and only under the condition $|v_1v_2|<|w_1w_2|$, as shown in Fig.~\ref{fig:Rice_Mele}(c-g), where two isolated edge modes appear aside from the continuous bulk OBC spectra. 

\subsection{Non-Hermitian size-parity effect}

Until now, we have assumed that the overall system size is an integer multiple of the number of bands, i.e., the size of the unit cells and the number of sublattices, e.g., an overall even number of sites for the non-Hermitian Rice-Mele and SSH models with period $k=2$. However, overall system size can play a significant role in topological phases and transitions in non-Hermitian systems~\cite{Li2020, Liu2020helical}. Here, we study the variation of the OBC spectra versus the parity of system sizes, sometimes dubbed the non-Hermitian size-parity effect, via the recurrence method. 

In the momentum space, the Hermitian Rice-Mele model under PBC takes the following form:
\begin{equation}
    H_{\text{RM}}(k)=h_x(k)\sigma_x+h_y(k)\sigma_y+h_z(k)\sigma_z, \label{eq:Hermitian_Rice_Mele}
\end{equation}
where $h_x=v+w\cos k$, $h_y=w\sin k$, and $h_z=V$. It has time-reversal symmetry $\mathcal{T}H_{\text{RM}}(k)\mathcal{T}^{-1}=H_{\text{RM}}(k)$, a generalized chiral symmetry $\hat{\Gamma}^{PT}H_{\text{RM}}(k)(\hat{\Gamma}^{PT})^{-1}=-H_{\text{RM}}(k)$, and a generalized particle-hole symmetry $\hat{\mathcal{C}}^{PT}H_{\text{RM}}(k)(\hat{\mathcal{C}}^{PT})^{-1}=-H_{\text{RM}}(k)$, where $\hat{\Gamma}^{PT}=i\sigma_y\hat{\mathcal{K}}$, $\hat{\mathcal{C}}^{PT}=i\sigma_y\otimes\hat{\mathcal{I}}_k$, $\hat{\mathcal{I}}_k\equiv (k\rightarrow -k)$ and 
$\hat{\mathcal{K}}$ being the complex conjugation~\cite{Han2020,Qi2023}. These symmetries ensure the model's symmetric bands $\pm E_k$. 

Due to the broken chiral (sublattice) symmetry, two separated nonzero edge modes emerge in the gap between the two symmetric bands~\cite{Wang2018, Han2020, Lin2020}. These two edge modes are localized on the opposing edges and evolve with model parameters, which can implement bidirectional edge transmission~\cite{Qi2023}. In the thermodynamic limit $L\rightarrow\infty$, the edge mode with $E=V$ is left localized at the left open boundary~\cite{Lin2020}: 
\begin{equation}
    \psi_{i, \tau}\propto \delta_{\tau, A}(-1)^{i+1}e^{-\kappa i}, \quad \kappa=\ln\frac{w}{v}, \label{eq:RM_edge_wave_function}
\end{equation}
where $i$ labels the unit cell and $\tau=A, B$ denotes the sublattices. $\delta_{\tau, A}$ indicates that the edge mode weighs exclusively on the $A$ sublattice. It is apparent from the decay constant $\kappa$ that such a localized edge mode exists only when $v<w$. Likewise, when and only when $v>w$, a right-localized edge mode with $E=-V$ emerges on the $B$ sublattices at the right open boundary.

\begin{figure*}[htb]
    \centering
    \includegraphics[width=\linewidth]{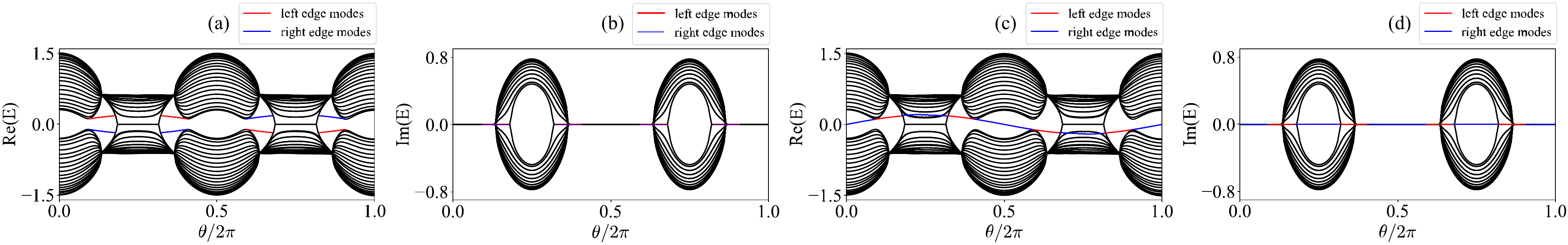}
    \caption{The real [(a) and (c)] and imaginary [(b) and (d)] parts of the OBC spectra for non-Hermitian Rice-Mele models in Eq. \eqref{eq:nH_Rice_Mele} on even-size [$2N=40$ in (a) and (b)] and odd-size [$2N+1=41$ in (c) and (d)] systems display a size-parity effect. The black lines represent the OBC bulk spectrum, and the red (blue) lines represent the left (right) localized edge modes. The model parameters $v_1=0.9+1.2\sin\theta$, $v_2=0.9-1.2\sin\theta$, $w_1=w_2=0.6$, and $V=0.2\sin\theta$ are variable with respect to $\theta$. For the $2N$-size models, a pair of edge modes at $E=\pm V$ appear when $0.56<\sin\theta<0.90$, consistent with condition $|v_1v_2|<|w_1w_2|$; for the $(2N+1)$-size models, an isolated edge mode at $E=V$ always exists irrespective of $\theta$, yet switches between the right and left edges as $\theta$ evolves. }
    \label{fig:RM_odd_even}
\end{figure*}

It is important to note that the left (right)-localized edge mode has the same energy $E$ as the onsite potential $V$ ($-V$) at the left (right) boundary. Intuitively, as the onsite potential is $V$ at both the left and right boundaries for a system with an odd number of sites, the corresponding edge modes should have $E=V$. Furthermore, Eq.~\eqref{eq:RM_edge_wave_function} determines the localization direction of isolated edge modes: if $v<w$ ($v>w$), edge modes can localize at the left (right) boundary. Therefore, we can manipulate the appearance and location of the edge modes through the model parameters. Such an interesting phenomenon has been widely used in topological pumping and quantum state transfer~\cite{Qi2023, Longhi2019topo, Chang2023}. 

Here, we demonstrate via the recurrence method that this size-parity effect can be extended to the non-Hermitian Rice-Mele model in Eq.~\eqref{eq:nH_Rice_Mele} and also in the absence of generalized chiral symmetry. In Sec.~\ref{sec:subedge_spectra}, we have shown that the existence condition of the edge modes is $|v_1v_2|<|w_1w_2|$, which returns to the $v<w$ condition in Hermitian cases. There, we have to choose the zeroth and second principle minors $D_0$ and $D_2$ as the initial conditions in an even-size non-Hermitian Rice-Mele model. However, in an odd-size non-Hermitian Rice-Mele model, we should change the initial conditions to the first and third principle minors $D_1$ and $D_3$ since the OBC spectrum is determined by $D_{2N+1}$. The expressions for $D_1$ and $D_3$ are: 
\begin{equation}
    D_1^{\text{nRM}}(E)=E-V, \quad D_3^{\text{nRM}}(E)=(E-V)p_{\text{nRM}}(E).
\end{equation}
Substitute them into:
\begin{equation}
    D_{2N+1}(E)=\sum\limits_{i=\pm}d_i(E)z_i(E)^N,
\end{equation}
we obtain: 
\begin{eqnarray} 
    E-V&=&d_+^{\text{nRM}}+d_-^{\text{nRM}}, \\ \nonumber
    (E-V)p_{\text{nRM}}&=&d_+^{\text{nRM}}z_+^{\text{nRM}}+d_-^{\text{nRM}}z_-^{\text{nRM}}.
\end{eqnarray}

Finally, from condition (ii) in Sec.~\ref{sec:recurrence_method} and $z_{\pm}^{\text{nRM}}=(p_{\text{nRM}}\pm\sqrt{p_{\text{nRM}}^2-4t_{\text{nRM}}})/2$, regardless of the bulk parameters $v_1$, $v_2$, $w_1$, and $w_2$, there always exists one and only one isolated edge mode with energy $E=V$ in the non-Hermitian Rice-Mele model with an odd number of sites, i.e., $2N+1$. We summarize example results in Fig.~\ref{fig:RM_odd_even}, where an isolated edge mode always exists yet switches between the left and right edges with an odd number of sites. Via the recurrence method, our results indicate that the size-parity effect in the Rice-Mele model, as well as the subsequent unidirectional topological edge excitation transmission and quantum state transfer~\cite{Qi2023, Longhi2019topo, Chang2023}, can be generalized to non-Hermitian scenarios and systems without generalized chiral symmetry.

\subsection{Robustness of topological edge modes}
\label{subsec:robust_edge_modes}

In the previous subsection, we studied the size-parity effect in non-Hermitian systems through the recurrence method as altered initial conditions. Another interesting scenario reflected in the modified initial conditions of the recurrence method is when boundaries encounter perturbations, which may impact the edge spectra of the systems~\cite{Li2023scale,Nakamura2023,Schindler2023,Landi2022}. Such problems are also helpful in evaluating the robustness of the edge modes as topological characterizations~\cite{Yao2018,Yao2018Chern}.  

For example, consider a non-Hermitian SSH model with an additional onsite potential on the left boundary:
\begin{equation}
    H_{\text{SSH}}'=H_{\text{SSH}}+V_1c_0^\dagger c_0-V_1c_1^\dagger c_1, \label{eq:perturbed_nH_ssh}
\end{equation}
where $H_{\text{SSH}}$ is in Eq.~\eqref{eq:nH_ssh}. Without loss of generality, we set $u_1=1$, $u_2=2$, $\gamma=1$, and an overall even number of sites. 

From the recurrence method, it is clear that the bulk spectrum of the boundary-perturbed model $H_{\text{SSH}}'$ remains unchanged while the edge spectrum is adjusted accordingly. The initial condition of the recurrence relation now becomes $D_2'(E)=E^2-V_1^2-u_1^2+\gamma^2/4$ and $D_0$ no longer equal to $1$:
\begin{equation}
    D_0'(E)=1-\frac{V_1(E-V_1)}{u_1^2-\gamma^2/4},
\end{equation}
for which we obtain $D_4'$ first and then use $D_4'=p_{\text{SSH}}D_2'-t_{\text{SSH}}D_0'$. Substituting $D_0'$ and $D_2'$ into Eq.~\eqref{eq:DkN}, and setting $d_\pm(E)=0$, we obtain the equation dictating the edge spectra of the perturbed non-Hermitian SSH model:  
\begin{equation}
    D_2'(E)^2-D_2'(E)D_0'(E)p_{\text{SSH}}(E)+t_{\text{SSH}}D_0'(E)^2=0,
\end{equation}
and, following condition (ii) in Sec.~\ref{sec:recurrence_method}, either of the two conditions: 
\begin{eqnarray}
    D_2'(E)&=&z_+(E)D_0'(E), \quad |z_+(E)|<|z_-(E)|, \nonumber \\
    D_2'(E)&=&z_-(E)D_0'(E), \quad |z_-(E)|<|z_+(E)|, \label{eq:edge_OBC_condition_perturbed}
\end{eqnarray} 
where $z_\pm=(p_{\text{SSH}}\pm\sqrt{p_{\text{SSH}}^2-4t_{\text{SSH}}})/2$ are identical to the original non-Hermitian SSH model.

\begin{figure}[ht]
    \centering
    \includegraphics[width=0.99\linewidth]{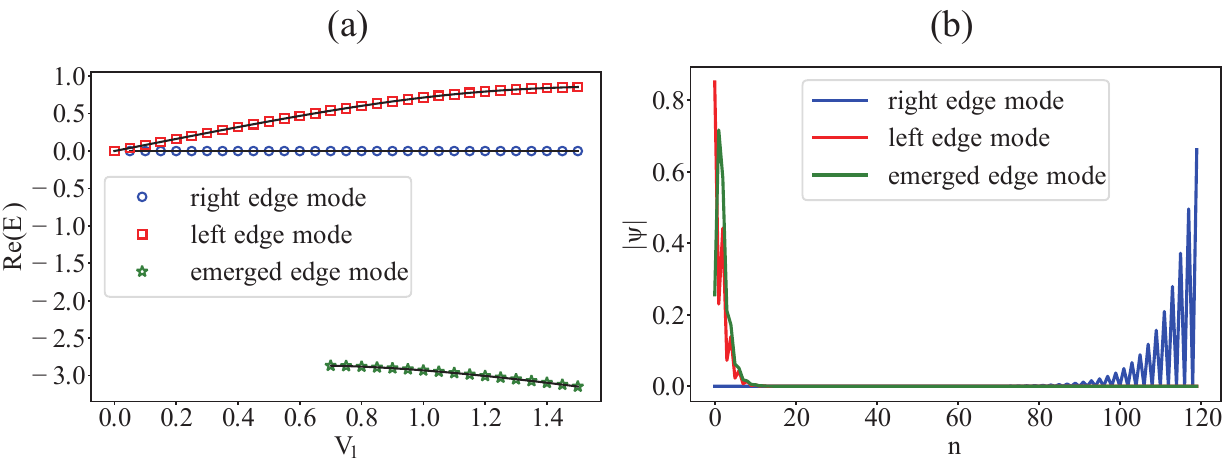}
    \caption{The spectra and localization properties of the edge modes for the boundary-perturbed non-Hermitian models in Eq.~\eqref{eq:perturbed_nH_ssh} show the stability of the topological edge modes. (a) As the perturbation $V_1$ varies on the left edge, the right edge mode (blue) stays unchanged, while the original left edge mode (red) develops an energy shift yet remains in the bulk gap. An additional edge mode emerges on the left boundary for $V_1>0.7$. (b) The wave functions of the respective edge modes show full localizations at $V_1=1.2$. $u_1=1$, $u_2=2$, and $\gamma=1$. }
    \label{fig:boundary_perturbation}
\end{figure}

We solve the equations numerically for the edge spectra of the perturbed non-Hermitian SSH models and the effect of the boundary perturbation $V_1$. The results are shown in Fig.~\ref{fig:boundary_perturbation}. As $V_1$ varies, the right edge mode keeps zero energy, as a gapped bulk separates it from the perturbation. On the contrary, the left edge mode develops an energy shift and splits from the right edge mode. The evolution of these edge modes under perturbations demonstrates the topological characteristics and stability of the non-Hermitian SSH models~\cite{Yao2018}. Interestingly, we observe an additional left edge mode emerging for $V_1>0.7$.

\section{Non-Hermitian Hofstadter model} \label{sec:nH_Hofstadter}

The energy spectrum of a non-interacting two-dimensional lattice model versus the presenting perpendicular magnetic field takes on a fractal and self-similar pattern commonly known as the Hofstadter butterfly~\cite{Hofstadter1976}, which has witnessed widespread appearances in various cutting-edge physics fields, including topological phases, quantum Hall effects, quasi-crystals, 2D materials~\cite{Hatsugai1993,dean2013hofstadter,ni2019observation,Lu2021hofstadter}. Recently, there have been studies on its interplay with non-Hermiticity~\cite{Shao2022, Chernodub2015, Sergey2014}, e.g., a biased quantum random walk model that forbids some of the hoppings while retaining the other hoppings of the electrons~\cite{Chernodub2015, Sergey2014}, whose spectrum in the presence of a commensurate magnetic flux can be solved exactly through the Chebyshev polynomials~\cite{Sergey2014}. Another non-Hermitian generalization of the Hofstadter butterfly considers a two-dimensional square lattice model with non-reciprocal hoppings in one of the directions and a perpendicular magnetic field~\cite{Shao2022}. Despite the non-reciprocity, the wave-packet trajectories still form closed orbits in the four-dimensional complex phase space in the complex semiclassical theory~\cite{Yang2024}; therefore, the quantization rule persists, and the Landau levels may take on still discrete yet complex energy levels~\cite{Shao2022, Yang2024}. On the other hand, non-Hermitian systems have a tendency and preference for real-valued OBC spectra.

\begin{figure}[htb]
    \centering
    \includegraphics[width=0.90\linewidth]{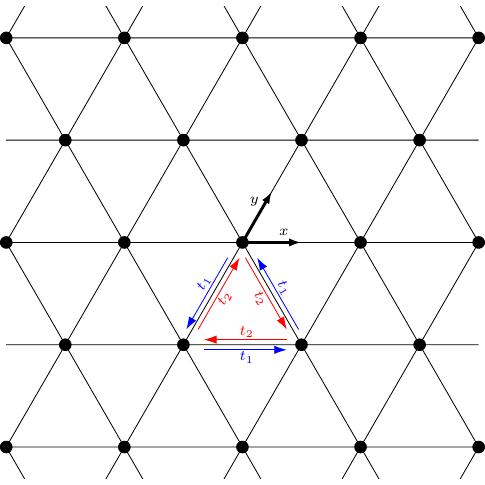}
    \caption{We consider the non-Hermitian Hofstadter butterfly on a two-dimensional triangular lattice. The model, given by the Hamiltonian in Eq. \eqref{eq:nH_triangle_lattice_model}, consists of unequal counterclockwise hoppings $t_1=\sqrt{(1-\delta)(1+\delta)}$ (blue arrows)  and clockwise hoppings $t_2=\sqrt{(1+\delta)(1-\delta)}$ (red arrows). Also, we introduce a perpendicular magnetic field with $p/2q$ magnetic flux quantum per triangle plaquette. We set $\boldsymbol{x}$ and $\boldsymbol{y}$ as our lattice vectors. }
    \label{fig:triangular_lattice}
\end{figure}

Here, as a concrete example of the interplay between the non-Hermiticity, the complex spectrum, and the magnetic field, we consider a non-reciprocal model on the two-dimensional triangle lattice, as shown in Fig.~\ref{fig:triangular_lattice}:
\begin{align}
    H_{\text{nHH}}=&\sum\limits_{x,y}(t_1c_{x+1,y}^\dagger c_{x,y}+t_2c_{x-1,y}^\dagger c_{x,y}+t_1c_{x-1,y+1}^\dagger c_{x,y} \\
    &+t_2c_{x+1,y-1}^\dagger c_{x,y}+t_1c_{x,y-1}^\dagger c_{x,y}+t_2c_{x,y+1}^\dagger c_{x,y}), \label{eq:nH_triangle_lattice_model}
\end{align}
where we introduce the non-reciprocity parameter $\delta$ as the difference between counterclockwise hoppings $t_1=\sqrt{(1-\delta)/(1+\delta)}$ and clockwise hoppings $t_2=\sqrt{(1+\delta)/(1-\delta)}$. We set $t_1 t_2=1$ as our unit of energy as well as $\hbar=e=1$ for simplicity. In the presence of a perpendicular magnetic field, each hopping acquires an additional phase $t'=te^{-i\theta}$, $\theta=\int \vec{A}\cdot d\vec{l}$ between the corresponding hopping's initial and final positions. Interestingly, the non-reciprocity in the clockwise and counterclockwise hoppings effectively acts as an imaginary magnetic field~\cite{Yang2024}, analogous to the imaginary gauge potential of the non-Hermitian Hatano-Nelson model~\cite{Hatano1996, Hatano1997, Longhi2015}. Hence, the non-reciprocal triangle lattice model in Eq.~\eqref{eq:nH_triangle_lattice_model} in the presence of a magnetic field $B_r$ is equivalent to a Hermitian triangle lattice model with unit hopping and a complex magnetic field $B=B_r+iB_i$, where the imaginary part $B_i$ relates to the non-reciprocity $\delta$: $e^{B_i}=(1+\delta)^3/(1-\delta)^3$~\cite{supp}.

\begin{figure*}[htb]
    \centering
    \includegraphics[width=0.78\linewidth]{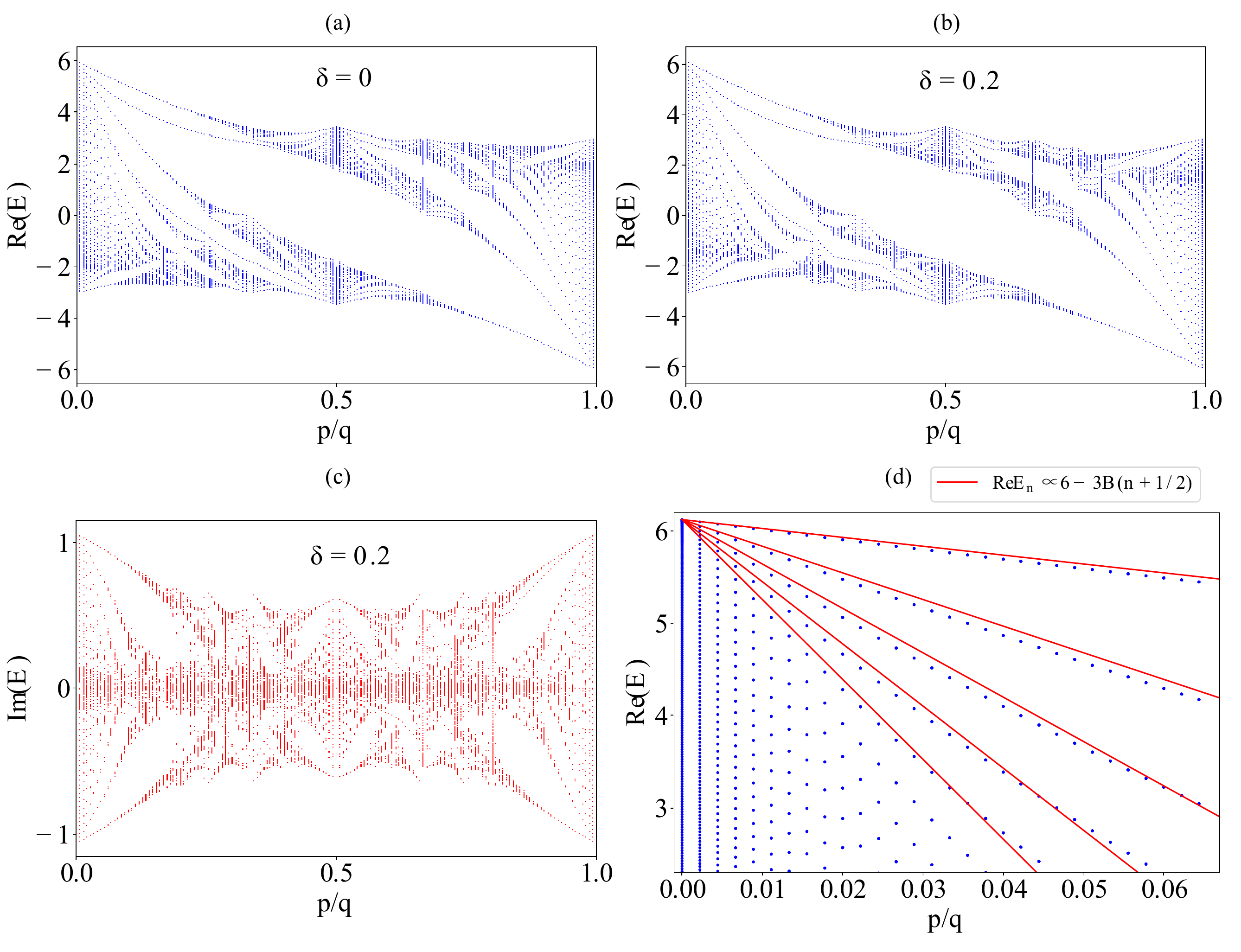}
    \caption{The spectra of the triangular lattice models in Eq.~\eqref{eq:nH_triangle_lattice_model} versus the applied perpendicular magnetic field ($p/q$ magnetic flux quantum per lattice plaquette) displays clear fractal and self-similar features resembling a (Hofstadter) butterfly, both in (a) Hermitian $\delta=0$ and (b-c) non-Hermitian $\delta=0.2$ scenarios. We consider a geometry with PBC in the $\boldsymbol{y}$ direction and OBC in the ${x}$ direction. The blue (b) and red (c) spectra are the real and imaginary parts, respectively. (d) The Landau fan structure near the band top (and bottom) exhibits a linear dependence on the magnetic field (red lines) consistent with $\sqrt{1-\delta^2}E_n\propto 6-3B(n+1/2), n=0, 1, 2, \cdots$ from the complex semiclassical theory~\cite{Yang2024,supp} and the interpretation of a complex effective magnetic field for a finite $\delta=0.2$. }
    \label{fig:nH_Hofstadter_fig1}
\end{figure*}

In addition, we consider the non-Hermitian Hofstadter butterfly with OBC in the $\boldsymbol{x}$ direction and PBC in the $\boldsymbol{y}$ direction, which, together with the choice of the Landau gauge $\vec{A}=(0, Bx)$, ensures $k_y$ as a good quantum number. For a particular $k_y$, we thus reduce the model to a one-dimensional non-Hermitian Hamiltonian with NN hopping: 
\begin{align}
    H_{\text{nHH}}(k_y)=&\sum\limits_{x}[t_{x}c_{x+1,k_y}^\dagger c_{x,k_y}+t_{x}'c_{x,k_y}^\dagger c_{x+1,k_y} \\
    &+V_xc_{x,k_y}^\dagger c_{x,k_y}], \label{eq:nH_complex_magnetic_field}
\end{align}
where the NN hoppings $t_{x}$, $t_x'$, and the onsite potential $V_x$ are in the Supplemental Materials \cite{supp} and generally complex-valued; therefore, the comparability with and similarity transformation to a Hermitian model no longer holds in such triangular-lattice Hofstadter butterfly models. Also, we consider a commensurate perpendicular magnetic field $B_r$, i.e., $B_r=2\pi (p/q)$ with coprime integers $p$ and $q$. Consequently, the Hamiltonian in Eq.~\eqref{eq:nH_complex_magnetic_field} is $q$-periodic ($k=q$). However, numerical diagonalization of a non-Hermitian Hamiltonian is highly susceptible to accumulated errors and sensitive to numerical precisions, especially in large systems~\cite{Zhesen2020, Colbrook2019, Yuto2020}. The non-Bloch band theory and GBZ method also face substantial obstacles and challenges for relatively large $q$, as the number of sublattices in the magnetic unit cell significantly elevates the computational cost. Hence, trustworthy calculations of energy spectra of Eq.~\eqref{eq:nH_complex_magnetic_field} for large $N$ and $q$ are exceedingly time-consuming~\cite{Zhesen2020}. Through the recurrence method, on the other hand, we transform the eigenvalue problem of Eq.~\eqref{eq:nH_complex_magnetic_field} to the root solutions of a series of $q$-th degree polynomials, which share the same coefficients except for the constant terms, as we have shown in Sec.~\ref{sec:k_periodic}. Numerically, we can solve these polynomials with an iterative algorithm called the Durand-Kerner method \cite{kerner1966,supp}, efficient and useful for simultaneously determining all complex roots of high-degree polynomials with a quadratic convergence rate. Therefore, the recurrence method avoids the numerical accuracy and instability issue in diagonalizing large non-Hermitian matrices and substantially reduces the computational complexity to $O(N)$ as long as $q\ll N$.

We summarize our results on the Hofstadter butterfly on the triangular lattice in Fig. \ref{fig:nH_Hofstadter_fig1} for $k_y=0$. Indeed, in the presence of a perpendicular magnetic field $B$, both the Hermitian ($\delta=0$) and non-Hermitian ($\delta=0.2$) models display the characteristic fractal and self-similar features intrinsic in a Hofstadter butterfly; see Fig.~\ref{fig:nH_Hofstadter_fig1}(a). On the other hand, the introduction of non-reciprocity $\delta$ deforms the spectra [Fig.~\ref{fig:nH_Hofstadter_fig1}(b)] and introduces imaginary parts [Fig.~\ref{fig:nH_Hofstadter_fig1}(c)]. Interestingly, the Landau fan structure is still present in the non-Hermitian case, as shown in Fig.~\ref{fig:nH_Hofstadter_fig1}(d), consistently and quantitatively validating the complex semiclassical theory in non-Hermitian systems~\cite{Yang2024,supp}. Notably, the magnetic unit cell at large $q$, e.g., a small $B$, becomes very large, and the computational cost and numerical stability quickly get out of control. However, it does not possess an apparent obstacle within the framework of our recurrence method, and we approach $q$ as high as $q_{\text{max}}=150$ as presented in Fig.~\ref{fig:nH_Hofstadter_fig1}. 

As previously stated, the non-Hermiticity $\delta$ introduces an imaginary magnetic field $iB_i$, so the overall effective magnetic field $B_c= B_r+iB_i$ is complex. Although $B_r$ needs to be commensurate and thus discrete, $\delta$ and $B_i$ are continuously variable, offering much finer and broader tunability to $B_c$. For instance, we can study a complex magnetic field $B_c=|B_c|\exp(i\theta)$ with a constant modulus $|B_c|=6$ that would have been incommensurate otherwise; see further details and results in the Supplemental Materials \cite{supp}.

\begin{figure*}[ht]
    \centering
    \includegraphics[width=0.98\linewidth]{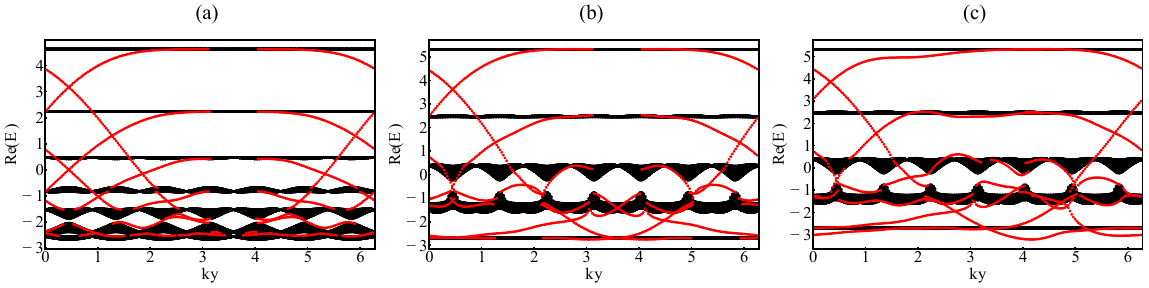}
    \caption{The real part of the $k_y$ dispersion of the Hofstadter model in Eq.~\eqref{eq:nH_triangle_lattice_model} exhibits clear bulk (black) and edge (red) contributions. The perpendicular magnetic field equals $p/q=1/7$ magnetic flux quantum per lattice plaquette, irrespective of the non-reciprocity and the perturbations. Counting the edge modes (red) gives a straightforward identification of the bulk bands (black) topological characters. The non-reciprocity parameter is: (a) $\delta=0$ for a Hermitian system, (b) $\delta=0.5$ for a non-Hermitian system, and (c) $\delta=0.5$ for a non-Hermitian system with perturbations $\sum_{y}(V_2c_{1,y}^\dagger c_{1,y}-V_2c_{2,y}^\dagger c_{2,y})$, $V_2=1$ on the left boundary. }
    \label{fig:nH_Hofstadter_fig2}
\end{figure*}

It is well known that the Hofstadter butterfly is an example of the integer quantum Hall effect, or equivalently, a 2D Chern insulator~\cite{dean2013hofstadter,Tran2015}. Such topological phases host characteristic chiral edges on the open boundaries, which offer physical and straightforward characterizations of the corresponding topological characters - the Chern numbers of the bulk bands~\cite{Hatsugai1993}. Naturally, the recurrence method's capacity for OBC provides a natural arena for exploiting such bulk-edge correspondence. 

For example, we examine the edge modes of the non-Hermitian Hofstadter models, i.e., Eq.~\eqref{eq:nH_triangle_lattice_model} in the presence of the perpendicular magnetic field, especially their dispersions with respect to $k_y$ - the momentum along the open boundaries. We show example results for $p/q=1/7$ in Fig.~\ref{fig:nH_Hofstadter_fig2}, which exhibit apparent edge modes (red), obtained from condition (ii) of the recurrence method, traversing the gaps between the bulk bands (black), mainly obtained from condition (i) of the recurrence method. Interestingly, compared with the Hermitian case in Fig.~\ref{fig:nH_Hofstadter_fig2}(a), although the non-reciprocity $\delta=0.5$ essentially deforms the bulk and edge spectra, the presence of the topological edge modes persists in the non-Hermitian Hofstadter models, as illustrated in Fig.~\ref{fig:nH_Hofstadter_fig2}(b). This is smoking-gun evidence that the corresponding topological phases generalize to non-Hermitian scenarios and, simultaneously, offers a clear-cut identification of the corresponding Chern numbers as $+2$ \footnote{The separation between the lowest two bulk bands is unclear. Thus, we present their topological characterization as a whole. }, $+1$, $-1$, $-1$, $-1$ of the bulk bands from the bottom to the top, as conventionally in Hermitian systems, the (signed) number of chiral edge modes traversing a gap determines the total topological invariants across all bands below the gap. 

Further, we may demonstrate that such topological invariants and phases are robust against perturbations straightforwardly via the recurrence method. For concreteness, we consider adding a boundary perturbation to the original non-Hermitian Hofstadter model: 
\begin{equation}
    H_{\text{nHH}}'=H_{\text{nHH}}+\sum_{y}(V_2c_{1,y}^\dagger c_{1,y}-V_2c_{2,y}^\dagger c_{2,y}), \label{eq:nH_triangle_model_perturbed}
\end{equation}
where the perturbation is translation invariant along $\boldsymbol{y}$, and thus transforms into $V_2 c_{1,k_y}^\dagger c_{1,k_y}-V_2 c_{2,k_y}^\dagger c_{2,k_y}$ in parallel to Eq.~\eqref{eq:nH_complex_magnetic_field} in the $k_y$ basis. Correspondingly, we modify and address the initial conditions with respect to the recurrence relations. 

Specifically, we derive the revised $D_0'$ following the calculations of $D_q'$ and $D_{2q}'$, and the recurrence relation $D_{2q}'(E)=p_q(E)D_q'(E)-TD_0'(E)$, where $T\equiv \prod_{x=1}^qt_xt_x'$. Hence, the bulk spectra remain intact, while the edge spectra instead follow a modified version of Eq.~\eqref{eq:Dk_pk_t}: 
\begin{equation}
    D_q'(E)^2-D_q'(E)D_0'(E)p_q(E)+TD_0'(E)^2=0, \label{eq:Dq_pq_t_perturbed}
\end{equation}
together with one of the following two constraints:
\begin{eqnarray}
    D_q'(E)&=&z_{q,+}(E)D_0'(E), \quad |z_{q,+}(E)|<|z_{q,-}(E)|, \nonumber \\
    D_q'(E)&=&z_{q,-}(E)D_0'(E), \quad |z_{q,-}(E)|<|z_{q,+}(E)|, \label{eq:edge_OBC_condition_perturbed_Hofstadter}
\end{eqnarray}
where $z_{q,\pm}=[p_q(E)\pm\sqrt{p_q(E)^2-4T}]/2$. The results for $V_2=1$ are presented in Fig.~\ref{fig:nH_Hofstadter_fig2}(c). Despite the perturbation, the numbers of edge modes crossing the bulk band gaps and, thus, the implied topological Chern numbers of the bulk bands, remain constant.

\section{Conclusion} \label{sec:conclusion}

We have proposed a novel recurrence method for the energy spectra of non-Hermitian systems under open boundary conditions. Based on the recurrence relations of their characteristic polynomials, our formalism transforms the eigenvalue problems of the target multi-band non-Hermitian Hamiltonians to the solutions of a series of polynomials. It is generally applicable to non-Hermitian systems with (partial) lattice translation symmetries. Therefore, it resolves the accuracy and instability issues in the numerical diagonalization of non-Hermitian matrices, especially for extensive system sizes $N$; with the fast-converging Durand-Kerner method, it also displays better computational complexity than the non-Bloch band theory and the GBZ method, especially for non-Hermitian systems with a considerable period or unit cell with many sublattices. Indeed, as we have demonstrated, it yields consistent analytical spectra or expressions for the non-Hermitian Su-Schrieffer-Heeger models and the non-Hermitian Rice-Mele models under open boundary conditions and efficient and precise numerical results of the non-Hermitian Hofstadter butterfly with a hefty $q_{\text{max}}$ and thus a large magnetic unit cell beyond previous approaches. These spectra host rich physics regarding the PT-symmetry breaking and topological phenomena of the target non-Hermitian systems. 

The recurrence method also provides a targeted formulation of the non-Hermitian edge spectra, with algorithmic and cost advantages over the previous edge matrix method. We have shown its applications on the edge spectra of multi-band non-Hermitian models with nearest-neighbor hopping and open boundary conditions. For instance, we have studied topological edge modes' presence and stability (against perturbations) in the non-Hermitian 1D Rice-Mele and the 2D non-Hermitian Hofstadter models, consistent with and convenient for their topological characterizations - also as examples of the bulk-boundary correspondences. In particular, we have also generalized the size-parity effect of the Hermitian Rice-Mele models to non-Hermitian counterparts and without the generalized chiral symmetry. 

The interplay between open boundaries and non-Hermitian, non-reciprocal physics has gathered much recent interest and attention and witnessed the discoveries of a series of intriguing and exotic phenomena; the introductions of extra factors such as topological phenomena, magnetic fields, dimensionality, etc., has further developed and enriched these research areas; therefore, our general, efficient, and accurate recurrence method offers a timely and promising formalism to facilitate and streamline studies on such non-Hermitian systems under open boundary conditions. The codes and results of this paper are available at Github \footnote{For codes and corresponding results in the Hofstadter model see \url{https://github.com/chenhaoyan05/recurrence_method}}.

\emph{Acknowledgment:} We thank Boris Shapiro for the insightful discussions. We also acknowledge support from the National Natural Science Foundation of China (No.12174008 \& No.92270102) and the National Key R\&D Program of China (No.2022YFA1403700).

\newpage

\bibliography{references}

\clearpage
\appendix
\widetext

\renewcommand\thefigure{S\arabic{figure}}\setcounter{figure}{0}
\renewcommand\thetable{S\arabic{table}}\setcounter{table}{0}

\begin{center}
\textbf{\large Supplementary Materials}\end{center}

\section{Theorems and Algorithms for Recurrence Relations}

\subsection{Properties of Recurrence Relations}

\begin{theorem} \label{sm_theorem_1}
    Let $M=[h_{ab}]$ be the Hamiltonian matrix of model $H=\sum\limits_{ab}c_a^\dagger h_{ab}c_b$ with hopping range $(l,r)$, namely $h_{ab}\neq0$ only for $-r\leq a-b\leq l$. Denote the determinant of submatrix containing first $n$ lattice sites as $M_n=\det[h_{ab}]_{1\leq a,b\leq n}$. Then for large $n$ $M_n$ satisfies a recurrence relation $M_n=\sum\limits_{j=1}^s R_{nj}(\{h_{ab}\})M_{n-j}$ where $s=(l+r)!/l!r!$ is the order of recurrence relation and coefficients $R_{nj}$ satisfy $R_{nj}(\{xh_{ab}\})=x^jR_{nj}(\{h_{ab}\})$.
\end{theorem}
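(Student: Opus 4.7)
My plan is to prove the recurrence by a Laplace expansion exploiting the band structure, with an exterior-algebra / transfer-matrix interpretation to explain why the order is exactly $s = \binom{l+r}{l}$.

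First I would Laplace-expand $M_n$ along the last $r$ columns. By the bandwidth condition $h_{ab}=0$ unless $-r\le a-b\le l$, a row $a$ has any non-zero entry in columns $n-r+1,\ldots,n$ only for $n-l-r+1 \le a \le n$: exactly $l+r$ candidate rows. Consequently the Laplace sum collapses to the $\binom{l+r}{r}=s$ choices of an $r$-subset $S\subset\{n-l-r+1,\ldots,n\}$,
\begin{equation*}
M_n = \sum_{S}(-1)^{\epsilon_S}\,\det\bigl([h_{ab}]_{a\in S,\,b\in[n-r+1,n]}\bigr)\,\det\bigl([h_{ab}]_{a\in[1,n]\setminus S,\,b\in[1,n-r]}\bigr).
\end{equation*}
Each corner minor is a polynomial of degree $r$ in entries $\{h_{ab}\}$ localized near row $n$, while the companion $(n-r)\times(n-r)$ ``big'' minor agrees with the standard leading minor $M_{n-r}$ on its first $n-l-r$ rows, differing from it only in the last $l$ rows.

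Second, I would reduce each big minor to a linear combination of standard leading minors $M_{n-j}$, $j=1,\ldots,s$. A second Laplace expansion along the modified last $l$ rows --- again restricted by the banded structure --- accomplishes this, and collecting coefficients across the $s$ subsets $S$ assembles the desired recurrence $M_n=\sum_{j=1}^s R_{nj}(\{h_{ab}\})M_{n-j}$ with polynomial coefficients in entries near row $n$. The main technical hurdle is verifying that these nested expansions actually close at order $s$ rather than some larger redundant order, and that the resulting $R_{nj}$ involve only local entries. This is best illuminated by the transfer-matrix picture: the null-vector equation $M\psi=0$ is an $(l+r)$-term linear recurrence encoded by an $(l+r)\times(l+r)$ transfer matrix $T_a$, and the leading minors $M_n$ correspond to a fixed matrix element of $\Lambda^l(T_1T_2\cdots T_n)$. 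Since $\Lambda^l T_a$ acts on a space of dimension $\binom{l+r}{l}=s$, Cayley-Hamilton applied to the induced operator directly produces an order-$s$ recurrence among the $M_n$'s, matching the combinatorial count and establishing the first part of the theorem (this is the route taken by Refs.~\cite{zakrajvsek2004pascal,bacher2002determinants}).

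Finally, the homogeneity $R_{nj}(\{xh_{ab}\})=x^jR_{nj}(\{h_{ab}\})$ follows immediately from the multilinearity of the determinant. Substituting $M_k(\{xh_{ab}\})=x^kM_k(\{h_{ab}\})$ into $M_n=\sum_{j=1}^sR_{nj}M_{n-j}$ and matching powers of $x$ --- valid generically by uniqueness of the minimal-order recurrence, then extended by polynomial continuity to the full parameter space --- forces $R_{nj}$ to be homogeneous of degree $j$ in the entries.
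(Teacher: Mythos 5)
Your high-level idea --- pass to an auxiliary family of $s=\binom{l+r}{l}$ generalized minors (equivalently, the $l$-th exterior power of the transfer matrix) and deduce an order-$s$ scalar recurrence from the fact that this family closes under a first-order step --- is exactly the mechanism behind the paper's proof. However, both concrete routes you offer have gaps. In the Laplace expansion along the last $r$ columns, the band condition $-r\le a-b\le l$ puts the nonzero entries of column $b$ in rows $b-r\le a\le b+l$, so the rows that can carry a nonzero entry in columns $n-r+1,\dots,n$ form the window $a\ge n-2r+1$ of size $2r$, not $n-l-r+1\le a\le n$ as you claim; the collapse to exactly $s$ subsets additionally requires that every row \emph{omitted} from $S$ retain a nonzero entry among the first $n-r$ columns (which forces the rows $a>n-r+l$ into $S$), and even granting this, the second expansion step --- reducing each ``big'' minor to standard leading minors and verifying closure at order $s$ --- is precisely the part you defer as a ``technical hurdle'' rather than prove.

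The fallback you invoke, Cayley--Hamilton on $\Lambda^l T_a$, only yields a recurrence among $M_n,\dots,M_{n-s}$ when the transfer matrices are site-independent; the theorem allows $n$-dependent coefficients $R_{nj}$ precisely because the general Hamiltonian need not be translation invariant, and then the products $T_n\cdots T_{n-j+1}$ are not powers of a single induced operator. The paper closes this constructively: it defines the $s$ minors $D_n^{(j_1,\dots,j_l)}$ (first $n$ rows; first $n-l$ columns together with $l$ columns chosen from the next $l+r$), shows by expanding along the last row that they obey a first-order vector recurrence $D_n^{(i)}=\sum_m c_{i,m}(n)D_{n-1}^{(m)}$ with each $c_{i,m}$ equal to $\pm h_{ab}$ or $0$, and then Gaussian-eliminates the $s^2-1$ auxiliary quantities from $s^2$ such relations, leaving one order-$s$ relation among $M_n,\dots,M_{n-s}$. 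That construction also delivers the homogeneity $R_{nj}(\{xh_{ab}\})=x^jR_{nj}(\{h_{ab}\})$ for free, since every elementary coefficient has degree one and elimination preserves the grading; your alternative of matching powers of $x$ in the final recurrence is weaker, because the coefficients of a non-minimal recurrence are not unique and need not inherit homogeneity automatically.
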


\begin{proof}
    Denote the submatrix of $H$ containing creation operators $c_{i_1}^\dagger,c_{i_2}^\dagger, \cdots, c_{i_N}^\dagger$ and annihilation operators $c_{j_1},c_{j_2},\cdots,c_{j_N}$ as $M_{(i_1,i_2,\cdots, i_N)}^{(j_1,j_2,\cdots, j_N)}$. Then define the determinant of the following submatrix for $1\leq j_1<j_2<\cdots<j_l\leq l+r$:
    \begin{equation}
        D_{n}^{(j_1,j_2,\cdots, j_l)}=\det M_{(1,2,\cdots,n)}^{(1,2,\cdots,n-l,n-l+j_1,n-l+j_2,\cdots,n-l+j_l)}.
    \end{equation}

    We can deduce the recurrence relation of $D_n^{(j_1,j_2,\cdots,j_l)}$ via expanding it with respect to the last creation operator $c_n^\dagger$. Depending on $j_l$, the results are classified into two types:
    \begin{equation}
        D_{n}^{(j_1,j_2,\cdots, j_l)}=\left\{
        \begin{array}{lc}
            h_{n,n+r}D_{n-1}^{(1,j_1+1,\cdots,j_{l-1}+1)}, & \quad j_l=l+r; \\
            (-1)^lh_{n,n-l}D_{n-1}^{(j_1+1,\cdots,j_{l}+1)}+\sum\limits_{i=1}^l(-1)^{l+i}h_{n,n-l+j_i}D_{n-1}^{(1,j_1+1,\cdots,j_{i-1}+1,j_{i+1}+1,\cdots,j_l+1)}, & \quad j_l<l+r.
        \end{array}
        \right. \label{eq:sm_D_n^j}
    \end{equation}

    The upper index of $D_n$ in Eq.~\eqref{eq:sm_D_n^j} can be rearranged using the index of subsets. Denote the $l$-subsets of $\{1,2,\cdots,l+r\}$ as $A_1,A_2,\cdots,A_s$, for example $A_1=\{1,2,\cdots,l\}$ and $A_i=\{j_1,j_2,\cdots,j_l\}$, then $D_{n}^{(j_1,j_2,\cdots, j_l)}$ can be written as:
    \begin{equation}
        D_n^{(i)}\equiv D_{n}^{(j_1,j_2,\cdots, j_l)}. 
    \end{equation}

    Under this definition $D_n^{(1)}=D_n^{(1,2,\cdots,l)}=\det M_{(1,2,\cdots,n)}^{(1,2,\cdots,n)}=M_n$. Then, we can write Eq.~\eqref{eq:sm_D_n^j} in a clear manner: 
    \begin{equation}
        D_n^{(i)}=\sum\limits_{m=1}^s c_{i,m}(n)D_{n-1}^{(m)}, \quad 1\leq i\leq s, \label{eq:sm_D_n^i_recurrence}
    \end{equation}
    where the coefficients $c_{i,m}(n)$ are one of the matrix element $h_{ab}$ multiply by $+1$ or $-1$ (may be $0$ if condition $-r\leq a-b\leq l$ is not satisfied). Let $n=n,n-1,\cdots,n-s+1$ in Eq.~\eqref{eq:sm_D_n^i_recurrence}, we obtain $s^2$ recurrence relations:
    \begin{equation}
        D_{n-j+1}^{(i)}=\sum\limits_{m=1}^s c_{i,m}(n-j+1)D_{n-j}^{(m)}, \quad 1\leq i\leq s, \, 1\leq j\leq s. \label{eq:sm_D_n-j^i_recurrence}
    \end{equation}

    These linearly independent recurrence relations have $s(s+1)$ unknown variables: $D_{n-j}^{(i)}$ for $1\leq i\leq s, \, 0\leq j\leq s$. It follows that using Gaussian elimination, we can eliminate the $s^2-1$ variables $D_{n-j}^{(i)}$, $2\leq i\leq s, \, 0\leq j\leq s$, leaving just one recurrence relation for $D_{n-j}^{(1)}=M_{n-j}$, $0\leq j\leq s$. Write this recurrence relation as 
    \begin{equation}
        M_n=\sum\limits_{j=1}^s R_{nj}(\{h_{ab}\})M_{n-j}. \label{eq:sm_Mn_recurrence}
    \end{equation}
    
    Note that coefficients $c_{i,m}(n-j+1)$ in Eq.~\eqref{eq:sm_D_n-j^i_recurrence} satisfy $c_{i,m}(n-j+1)(\{xh_{ab}\})=xc_{i,m}(n-j+1)(\{h_{ab}\})$, the homogeneous property of coefficients in Eq.~\eqref{eq:sm_Mn_recurrence} $R_{nj}(\{xh_{ab}\})=x^jR_{nj}(\{h_{ab}\})$ is ensured by the Gaussian elimination.
\end{proof}

When the non-Hermitian model is a single-band model with the lattice translational symmetry, namely, we have $h_{ab}=h_{a+1,b+1}$ for any matrix element $h_{ab}$, the coefficients of the recurrence relation $R_{nj}(\{h_{ab}\})$ in Eq.~\eqref{eq:sm_Mn_recurrence} will be independent of $n$: $R_{nj}(\{h_{ab}\})\equiv C_j(\{h_{ab}\})$. One can use the following algorithm to obtain the recurrence relation coefficients $R$ based on Theorem \ref{sm_theorem_1}:

\begin{algorithm}[H]
    \caption{Algorithm for Recurrence Relations}
    \label{alo:recurrence}
    \renewcommand{\algorithmicrequire}{\textbf{Input:}}
    \renewcommand{\algorithmicensure}{\textbf{Output:}}

    \begin{algorithmic}[1]
        \Require $l$, $r$, $n$, $M$    
        \Ensure recurrence relation coefficients $R$  
        \Function{labelsubsets}{$l$,$r$}
            \State labeled $l$-subsets of $\{1,2,\cdots,l+r\}$ such that $S[1]=\{1,2,\cdots,l\}$
            \State \Return $S$
        \EndFunction
        \Function{IndexOf}{$\{j_1,j_2,\cdots,j_l\}$,$l$,$r$}
            \State $S=\text{labelsubsets}(l,r)$
            \State $S[i]=\{j_1,j_2,\cdots,j_l\}$
            \State \Return $i$
        \EndFunction
        \State $S=\text{labelsubsets}(l,r)$
        \State $s=(l+r)!/l!r!$
        \For{$i=1:s$}
            \For{$j=1:s$}
                \State $\text{subset}=S[i]$
                \State $j_l=\text{subset}[l]$
                \If{$j_l==l+r$}
                    \State $\text{subset\_right}=\{1,\text{subset}[1:l-1]+1\}$
                    \State $k=\text{IndexOf}(\text{subset\_right},l,r)$
                    \State $C[i,j,k]=M[n-j+1,n-j+r+1]$
                \Else
                    \State $\text{subset\_right}=\{\text{subset} + 1\}$
                    \State $k=\text{IndexOf}(\text{subset\_right},l,r)$
                    \State $C[i,j,k]=(-1)^lM[n-j+1,n-j-l+1]$
                    \For{$m=1:l$}
                        \State $\text{subset\_right}=\{1,\text{subset}[1:m-1]+1,\text{subset}[m+1:l]+1\}$
                        \State $k=\text{IndexOf}(\text{subset\_right},l,r)$
                        \State $C[i,j,k]=(-1)^{l+m}M[n-j+1,n-j-l+1+\text{subset}[m]]$
                    \EndFor
                \EndIf
            \EndFor
        \EndFor
        \State Find recurrence relation $R[j]$ from $C[i,j,k]$ by Gaussian elimination
    \end{algorithmic}
\end{algorithm}

The OBC spectrum of a single-band non-Hermitian Hamiltonian with the translational symmetry of a lattice vector $H=\sum\limits_{ij}c_i^\dagger h_{ij}c_j$ is determined from solutions of the determinant $\det(E-H)$. Denote the determinant of $N$-site lattice model as $D_N:=\det(EI_N-H_N)$, then $D_N(E)$ is a polynomial of degree $N$ and satisfies the following recurrence relation according to Theorem \ref{sm_theorem_1}:
\begin{equation}
    D_N(E)=\sum\limits_{k=1}^{s}C_k(\{t_{\alpha\beta}\},E)D_{N-k}(E), \label{eq:sm_recursion}
\end{equation}
where the order of the recurrence relation $s=(l+r)!/l!r!$ is related to the hopping range $(l,r)$ of $H$ and coefficients $C_k(\{t_{\alpha\beta}\},E)$ satisfy $C_k(\{xt_{\alpha\beta}\},xE)=x^kC_k(\{t_{\alpha\beta}\},E)$. The characteristic equation of Eq.~\eqref{eq:sm_recursion} is:
\begin{equation}
    z^s=\sum\limits_{k=1}^{s}C_k(\{t_{\alpha\beta}\},E)z^{s-k}. \label{eq:sm_characteristic_equation}
\end{equation}

The determinant $D_N(E)$ is obtained from solutions $z_1(E), z_2(E), \cdots, z_s(E)$ of Eq.~\eqref{eq:sm_characteristic_equation} by:
\begin{equation}
    D_N(E)=\sum\limits_{i=1}^sd_i(E)z_i(E)^N, \label{eq:sm_DN_z}
\end{equation}
where coefficients $d_i(E)$ are derived from the initial conditions by setting $N=0, 1, \cdots, s-1$ and $D_0(E)=1$. OBC spectrum of Hamiltonian $H$ in the thermodynamic limit is equivalent to the solutions of $D_N(E)=0$ in the limit $N\rightarrow\infty$. There is a mathematical theorem states that if polynomials $D_N(E)$ satisfy some non-degenerate conditions, then the $N\rightarrow\infty$ limit of solutions of $D_N(E)=0$ can be determined:
\begin{theorem}
    Let a sequence of polynomials $D_N(E)$ satisfy the recurrence relation in Eq.~\eqref{eq:sm_recursion} and no recurrence relation of an order less than $s$ exists. If there does not exist a phase factor $e^{i\varphi}$ such that $z_i(E)=e^{i\varphi}z_j(E)$ for some $i\neq j$ always hold, then the $N\rightarrow\infty$ limit of zeros of $D_N(E)$ are given by one of the following statements:
    \begin{enumerate}[(i).]
        \item The two solutions with maximal absolute values have the same moduli $|z_i(E)|=|z_j(E)|\geq |z_{m\neq i,j}(E)|$;
        \item Zero coefficient $d_i(E)=0$ before the maximal modulus solution $|z_i(E)|>|z_{j\neq i}(E)|$ in Eq.~\eqref{eq:sm_DN_z}.
    \end{enumerate} \label{sm_theorem_2}
\end{theorem}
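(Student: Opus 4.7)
The plan is to follow the strategy of Beraha, Kahane and Weiss: fix $E$, analyze the asymptotic behavior of $D_N(E)=\sum_{i=1}^s d_i(E)\,z_i(E)^N$ as $N\to\infty$, and classify the points $E$ that arise as accumulation points of the zero sets $\{E:D_N(E)=0\}$. The $z_i(E)$ and $d_i(E)$ depend algebraically on $E$, so they are continuous branches away from a discrete discriminant locus, which is what allows pointwise asymptotics to be upgraded to neighborhood statements by continuity.

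The first step is a non-accumulation result for generic $E_0$: if $|z_1(E_0)|>|z_j(E_0)|$ for all $j\neq 1$ and $d_1(E_0)\neq 0$, then on a small neighborhood $U$ of $E_0$ the factorization
\[
D_N(E)=z_1(E)^N\Bigl[\,d_1(E)+\sum_{j\geq 2}d_j(E)\bigl(z_j(E)/z_1(E)\bigr)^N\,\Bigr]
\]
shows that the bracket converges uniformly to $d_1(E)$, which is bounded away from zero. Hence $D_N$ has no zero in $U$ for large $N$, and $E_0$ cannot be an accumulation point. This confirms that every limit point of zeros must fall into regime (i) or (ii).

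For the converse direction, suppose $E_0$ satisfies (ii): $|z_1(E_0)|>|z_j(E_0)|$ strictly for $j\neq 1$, and $d_1(E_0)=0$. The same factorization yields $D_N(E)/z_1(E)^N\to d_1(E)$ uniformly near $E_0$. Since $d_1$ is a nonzero algebraic function vanishing at $E_0$, Hurwitz's theorem forces the bracketed factor, hence $D_N$ itself, to have a zero arbitrarily close to $E_0$ for all large $N$. For condition (i), say $|z_1(E_0)|=|z_2(E_0)|>|z_j(E_0)|$ for $j\geq 3$, I would write near $E_0$
\[
D_N(E)/z_1(E)^N = d_1(E) + d_2(E)\,w(E)^N + o(1),\quad w:=z_2/z_1,
\]
with $|w(E_0)|=1$. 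The locus $\Gamma=\{|w|=1\}$ is a real analytic curve through $E_0$, and as $E$ traverses a small arc of $\Gamma$ the phase $\arg w(E)^N$ winds approximately $N$ times, so for each large $N$ the equation $d_2(E)\,w(E)^N=-d_1(E)$ admits a solution near $E_0$ by an intermediate-value argument; a Rouché estimate then upgrades the approximate zero to an exact zero of $D_N$.

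The main obstacle will be Step 3 in full generality: when three or more solutions share the maximal modulus, the asymptotic equation becomes a trigonometric polynomial $\sum_j d_{i_j}(E)\,e^{iN\theta_j(E)}=0$ in the winding angles $\theta_j$. Establishing that its zeros densely cover the entire multi-equi-modulus locus demands a Kronecker--Weyl style equidistribution argument combined with an implicit-function step along the locus. The non-degeneracy hypothesis -- that no ratio $z_i/z_j$ is identically a constant unimodular -- is precisely what prevents the winding phases from being frozen, and I expect that verifying this hypothesis rules out all the pathological configurations, together with uniformizing the asymptotics across the discriminant locus where the $z_i$ may collide, will constitute the technical heart of the argument.
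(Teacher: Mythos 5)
First, a point of reference: the paper does not prove this statement at all. It is quoted as a known result --- the Beraha--Kahane--Weiss theorem --- and delegated to Ref.~\cite{Beraha1975}, so there is no in-paper argument to measure your proposal against; you are in effect reproving BKW from scratch. Within that task, your Steps~1 and~2 are correct and essentially complete: uniform convergence of $D_N(E)/z_1(E)^N$ to $d_1(E)$ on a neighborhood where $z_1$ is strictly dominant rules out accumulation of zeros at points violating both (i) and (ii), and Hurwitz's theorem converts $d_1(E_0)=0$ into accumulation, provided you add that $d_1\not\equiv 0$ is guaranteed by the minimality of the recurrence order $s$ and that a consistent modulus-ordering of the branches $z_i$ can be fixed on a sufficiently small neighborhood away from the discriminant locus.

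The genuine gap is Step~3, and it is larger than your closing paragraph suggests. Even for exactly two equimodular dominant roots, the intermediate-value/winding argument presumes that $\{|z_2/z_1|=1\}$ is a nondegenerate arc near $E_0$ and that $d_1$ and $d_2$ do not both vanish there; these degeneracies occur only at isolated points and can be absorbed because the set of limit points of zeros is closed, but that reduction must be made explicit. More seriously, the Kronecker--Weyl equidistribution strategy you propose for three or more equimodular roots is not the right tool: the winding rates of the several phases $\arg\bigl(z_i/z_j\bigr)^N$ along the locus need not be independent, and a sum of several rotating terms can conspire to avoid zero even when each individual phase equidistributes. The argument that actually closes this case (and subsumes the two-root case) in Beraha--Kahane--Weiss runs by contradiction: if $D_N$ were zero-free on a neighborhood $U$ of $E_0$ for all large $N$, the ratios $D_{N+1}/D_N$ would be analytic there and, suitably controlled, form a normal family; Vitali's theorem forces every subsequential limit to coincide with a single analytic function $g$ on $U$, yet $g$ must equal $z_i$ on the open subregion of $U$ where $z_i$ is the unique dominant root with $d_i\neq 0$ and equal $z_j$ on the subregion where $z_j$ is, whence $z_i\equiv z_j$ by the identity theorem --- contradicting the hypothesis that no two branches differ by a constant phase factor. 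Replacing your equidistribution step by this compactness argument (or simply importing the theorem from Ref.~\cite{Beraha1975}, as the paper itself does) is what is required to complete the proof.
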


For example, in the case of Hatano-Nelson model $H=\sum\limits_i[(t-\gamma)c_{i+1}^\dagger c_i+(t+\gamma)c_{i}^\dagger c_{i+1}]$, the recurrence relation of $D_N(E)$ is given by:
\begin{equation}
    D_N(E)=ED_{N-1}(E)-(t^2-\gamma^2)D_{N-2}(E). \label{eq:sm_HN_recurrence}
\end{equation}

The characteristic equation of recurrence relation in Eq.~\eqref{eq:sm_HN_recurrence} has solutions: 
\begin{equation}
    z_1(E)=\frac{E+\sqrt{E^2-4(t^2-\gamma^2)}}{2}, \quad z_2(E)=\frac{E-\sqrt{E^2-4(t^2-\gamma^2)}}{2}.
\end{equation}

According to condition (i) in Theorem~\ref{sm_theorem_2}, $|z_1(E)|=|z_2(E)|$ gives the expression for the OBC bulk spectra of Hatano-Nelson model:
\begin{equation}
    E=2\sqrt{t^2-\gamma^2}\cos\theta, \quad \theta\in[0, \pi]. 
\end{equation}

\subsection{Recurrence relations in the multiband case}

In the case of multiband Hamiltonians, the coefficients $R_j$ of the recurrence relation are $N$-dependent. For example, in the non-Hermitian SSH model: 
\begin{equation}
    H=\sum\limits_i[(u_1+\gamma_1)a_{i}^\dagger b_i+(u_1-\gamma_1)b_i^\dagger a_i]+\sum\limits_i[(u_2+\gamma_2)a_{i+1}^\dagger b_i+(u_2-\gamma_2)b_i^\dagger a_{i+1}]. \label{eq:sm_nH_SSH}
\end{equation}

The recurrence relations have period $k=2$:
\begin{eqnarray}
    D_{2N}&=&ED_{2N-1}-(u_1^2-\gamma_1^2)D_{2N-2}, \nonumber\\
    D_{2N-1}&=&ED_{2N-2}-(u_2^2-\gamma_2^2)D_{2N-3}. 
\end{eqnarray}

These recurrence relations can be rearranged to the $N$-independent recurrence relation about $D_{2N}$, $D_{2N-2}$, $D_{2N-4}$:
\begin{equation}
    D_{2N}=(E^2+\gamma_1^2+\gamma_2^2-u_1^2-u_2^2)D_{2N-2}-(u_1^2-\gamma_1^2)(u_2^2-\gamma_2^2)D_{2N-4}. 
\end{equation}

One can show that this $N$-independent recurrence relation always exists. We will discuss the recurrence relations of periodic nearest-neighbor hopping models later. For general $m$-band short-range Hamiltonians, we have the following theorem: 
\begin{theorem}
    Consider the Hamiltonian matrix $M=[h_{ab}]$ of non-Hermitian $m$-band model $H=\sum\limits_{ab}c_a^\dagger h_{ab}c_b$ with hopping range $(l,r)$, namely $h_{a,b}=h_{a+m,b+m}$. Denote the determinant of submatrix containing first $n$ sites as $D_n=\det[h_{ab}]_{1\leq a,b\leq n}$. Then for large $n$, $D_n$ satisfies a $n$-independent recurrence relation $D_n=\sum\limits_{j=1}^s C_j(\{h_{ab}\})D_{n-jm}$ where $s=(l+r)!/l!r!$ is the order of recurrence relation and coefficients $C_j$ satisfy $C_j(\{xh_{ab}\})=x^{jm}C_j(\{h_{ab}\})$.
\end{theorem}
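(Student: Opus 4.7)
The plan is to combine Theorem~\ref{sm_theorem_1} (the single-band case) with the $m$-lattice-periodicity of the Hamiltonian through a transfer-matrix plus Cayley-Hamilton argument. The starting point is the system of $s$ coupled first-order recurrences in Eq.~\eqref{eq:sm_D_n^i_recurrence} for the principal-minor variants $D_n^{(i)}$, which served as the workhorse of the single-band proof. Because $h_{a,b}=h_{a+m,b+m}$, each coefficient $c_{i,k}(n)$ appearing there satisfies $c_{i,k}(n)=c_{i,k}(n+m)$, so the whole system is $m$-periodic in $n$.

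First, I would package the $D_n^{(i)}$ into a column vector $\vec{w}_n=(D_n^{(1)},\ldots,D_n^{(s)})^T$ and rewrite Eq.~\eqref{eq:sm_D_n^i_recurrence} as $\vec{w}_n=A_n\vec{w}_{n-1}$, where $A_n$ is an $s\times s$ matrix whose nonzero entries are $\pm h_{ab}$ (hence each homogeneous of degree $1$ in $\{h_{ab}\}$) and which depends on $n$ only through $n\bmod m$. Iterating $m$ steps and invoking periodicity gives $\vec{w}_{n+m}=A\vec{w}_n$ with $A\equiv A_{n+m}A_{n+m-1}\cdots A_{n+1}$; up to cyclic rotations of the factors (which preserve the characteristic polynomial via similarity when each $A_k$ is invertible), $A$ is independent of the starting index in the bulk.

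Next, I would invoke Cayley-Hamilton on $A$: writing $\chi_A(\lambda)=\lambda^s-\sum_{j=1}^s C_j\lambda^{s-j}$ yields $A^s=\sum_{j=1}^s C_j A^{s-j}$, and hence $\vec{w}_{n+sm}=\sum_{j=1}^s C_j\vec{w}_{n+(s-j)m}$. Extracting the first component (noting $D_n^{(1)}=D_n$) and relabeling gives the desired $D_n=\sum_{j=1}^s C_j(\{h_{ab}\})D_{n-jm}$ with coefficients independent of $n$. The homogeneity claim then reduces to a degree count: since each nonzero entry of $A_n$ is homogeneous of degree $1$ in $\{h_{ab}\}$, each entry of the $m$-fold product $A$ is homogeneous of degree $m$; and $C_j$, being (up to sign) the $j$-th elementary symmetric polynomial of $A$'s eigenvalues, equivalently the sum of $j\times j$ principal minors of $A$, is therefore homogeneous of degree $jm$, matching $C_j(\{xh_{ab}\})=x^{jm}C_j(\{h_{ab}\})$.

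The main obstacle I anticipate lies in two places. First, the intricate case analysis in Eq.~\eqref{eq:sm_D_n^j} must be carefully translated into the explicit entries of $A_n$ in order to verify both the $m$-periodicity $A_n=A_{n+m}$ and the homogeneity of each entry. Second, if some $A_k$ fail to be invertible (for fine-tuned parameters or for special boundary-hopping patterns), the similarity argument for cyclic rotations of $A$ needs to be replaced by a polynomial-identity argument: the $C_j$ are polynomials in $\{h_{ab}\}$, so it suffices to establish the recurrence on the generic (invertible) locus and extend by continuity of polynomial coefficients. The conceptual heart of the proof, however, is simply that any $m$-periodic linear system collapses to a constant-coefficient recurrence on arithmetic subsequences of step $m$, with the order bounded by the transfer-matrix dimension $s$ via Cayley-Hamilton.
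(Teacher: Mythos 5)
Your proposal is correct, and it is worth noting that the paper never actually writes out a proof of this multiband theorem: it proves the single-band Theorem~\ref{sm_theorem_1}, works the non-Hermitian SSH example by hand, asserts that ``one can show that this $N$-independent recurrence relation always exists,'' and separately gives a $2\times 2$ transfer-matrix derivation for the nearest-neighbor case. Your route supplies the missing general argument and does so by a different mechanism than the one the paper uses for Theorem~\ref{sm_theorem_1}. There, the coupled first-order system for the auxiliary determinants $D_n^{(i)}$ is collapsed by Gaussian elimination over $s$ consecutive values of $n$, which yields $n$-dependent coefficients $R_{nj}$; the paper's implied multiband extension is to iterate this and resum over a period, as done explicitly for the SSH example. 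You instead exploit the $m$-periodicity $A_{n+m}=A_n$ of the transfer matrices directly, form the monodromy matrix $\mathcal{A}=A_{n+m}\cdots A_{n+1}$, and read off the constant-coefficient recurrence from Cayley--Hamilton. This is cleaner and more structural: it identifies the $C_j$ as (signed) sums of $j\times j$ principal minors of $\mathcal{A}$, which makes the homogeneity degree $jm$ immediate from the degree-$m$ homogeneity of $\mathcal{A}$'s entries, and it reduces to the paper's own nearest-neighbor identity $D_{(N+1)k}=\mathrm{Tr}(U)D_{Nk}-\det(U)D_{(N-1)k}$ as the $s=2$ special case. Two small remarks: the cyclic-rotation step does not actually require invertibility of the $A_k$, since $XY$ and $YX$ have the same characteristic polynomial for arbitrary square matrices, so your fallback density argument, while valid, is unnecessary; and Cayley--Hamilton only guarantees a recurrence of order \emph{at most} $s$, which is all the theorem asserts (the paper's nondegeneracy caveat in Theorem~\ref{sm_theorem_2} handles the possibility of a lower-order minimal recurrence). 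The one point you should make explicit when writing this up is that $\mathcal{A}$ depends only on $n\bmod m$, so the constant-coefficient recurrence is obtained on each arithmetic subsequence of step $m$ separately, with the residue class $n\equiv 0\pmod m$ being the one used in the main text.
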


This theorem reveals the superiority of the recurrence method in determining the energy spectra of multi-band non-Hermitian systems under OBCs: whatever the number of sites in one unit cell of the non-Hermitian Hamiltonian is, we can always establish a recurrence relation with constant coefficients for its characteristic polynomial $D_N(E)$. The order of this recurrence relation depends solely on the hopping ranges and is not influenced by the number of bands. Consequently, this endows the recurrence method with a distinct advantage for multi-band non-Hermitian Hamiltonians such as the non-Hermitian Hofstadter models. Such an advantage will also be demonstrated more clearly in the following sections.

\section{Recurrence method for one-dimensional periodic nearest-neighbor hopping model} \label{appendix:1d_nn_hopping}

Consider the one-dimensional NN hopping model under OBC with $k$-periodic onsite potential $V_l=V_{l+k}$ and hopping $t_{l}=t_{l+k}$, $t_{l}'=t_{l+k}'$: 
\begin{equation}
    H=\sum\limits_{l=1}^{N-1}(t_{l}c_{l+1}^\dagger c_l+t_{l}'c_l^\dagger c_{l+1}+V_ln_l).  \label{eq:sm_periodic_HN}
\end{equation}

The recurrence relation of the characteristic polynomial $\det(E-H)$ is:
\begin{equation}
    D_N=(E-V_N)D_{N-1}-t_{N-1}t_{N-1}'D_{N-2}. 
\end{equation}

To obtain the size-independent recurrence relation, we write the recurrence relation in the following matrix form:
\begin{equation}
    \begin{pmatrix}
        D_{Nk} \\
        D_{Nk-1}
    \end{pmatrix}=\begin{pmatrix}
        E-V_{k} & -t_{k-1}t_{k-1}' \\
        1 & 0
    \end{pmatrix}\begin{pmatrix}
        D_{Nk-1} \\
        D_{Nk-2}
    \end{pmatrix}\triangleq U_k\begin{pmatrix}
        D_{Nk-1} \\
        D_{Nk-2}
    \end{pmatrix},
\end{equation}
where $U_k$ represents the transfer matrix from $(D_{Nk-1},D_{Nk-2})^T$ to $(D_{Nk},D_{Nk-1})^T$. Applying this transfer matrix recursively, we obtain: 
\begin{equation}
    \begin{pmatrix}
        D_{Nk} \\
        D_{Nk-1}
    \end{pmatrix}=U_kU_{k-1}\cdots U_1\begin{pmatrix}
        D_{(N-1)k} \\
        D_{(N-1)k-1}
    \end{pmatrix}\triangleq U\begin{pmatrix}
        D_{(N-1)k} \\
        D_{(N-1)k-1}
    \end{pmatrix}, \label{eq:sm_overall _recursion}
\end{equation}
where the overall transfer matrix from system size $N-1\rightarrow N$ is:  
\begin{equation}
    U=\begin{pmatrix}
        E-V_{k} & -t_{k-1}t_{k-1}' \\
        1 & 0
    \end{pmatrix}\begin{pmatrix}
        E-V_{k-1} & -t_{k-2}t_{k-2}' \\
        1 & 0
    \end{pmatrix}\cdots \begin{pmatrix}
        E-V_{1} & -t_{k}t_{k}' \\
        1 & 0
    \end{pmatrix}\triangleq\begin{pmatrix}
        A(E) & B(E) \\
        C(E) & D(E)
    \end{pmatrix},
\end{equation}

We can then work out the recurrence relation about $D_{Nk},D_{(N-1)k},D_{(N+1)k}$. In fact, replacing $N\rightarrow N+1$ in the first row equation of Eq.~\eqref{eq:sm_overall _recursion} and multiplying the second row equation with $B(E)$, we obtain:
\begin{eqnarray}
    D_{(N+1)k}&=&A(E)D_{Nk}+B(E)D_{Nk-1},  \label{subeq:1}\\
    B(E)D_{Nk-1}&=&B(E)C(E)D_{(N-1)k}+B(E)D(E)D_{(N-1)k-1}, \label{subeq:2}
\end{eqnarray}

Replacing the terms in Eq.~\eqref{subeq:2} containing $D_{Nk-1},D_{(N-1)k-1}$ using Eq.~\eqref{subeq:1}, we have:
\begin{equation}
    D_{(N+1)k}=\text{Tr}(U)D_{Nk}-\det (U)D_{(N-1)k}. \label{eq:sm_trU_detU}
\end{equation}

Note that both $\text{Tr}(U)$ and $\det(U)$ are size-independent, so we can apply the recurrence relation method in the main text to obtain the OBC spectrum. For $k=3$ case with identical nearest-neighbor hopping $t_{1}=t_{2}=t_{3}\equiv u_1$, $t_{1}'=t_{2}'=t_{3}'\equiv u_2$, the coefficients of the recurrence relation in Eq.~\eqref{eq:sm_trU_detU} are: $\text{Tr}(U)=E^3-E^2(V_1+V_2+V_3)+ 
E(-3u_1u_2+V_1V_2+V_1V_3+V_2V_3)+ u_1u_2(V_1 + V_2 + V_3)-V_1V_2V_3$, $\det(U)=u_1^3u_2^3$. For a general $k$, we can derive $\text{Tr}(U)$ and $\det(U)$ via the following matrix identity:
\begin{equation}
    \begin{pmatrix}
        x_1 & 1 \\
        y_1 & 0
    \end{pmatrix}\begin{pmatrix}
        x_2 & 1 \\
        y_2 & 0
    \end{pmatrix}\cdots\begin{pmatrix}
        x_n & 1 \\
        y_n & 0
    \end{pmatrix}=\begin{pmatrix}
        f_n(x_1,\cdots,x_n,y_1,\cdots,y_n) & f_{n-1}(x_1,\cdots,x_{n-1},y_1,\cdots,y_{n-1}) \\
        y_1f_{n-1}(x_2,\cdots,x_n,y_2,\cdots,y_n) & y_1f_{n-2}(x_2,\cdots,x_{n-1},y_2,\cdots,y_{n-1})
    \end{pmatrix}, \label{eq:sm_matrix_identity}
\end{equation}
where the bivariate polynomials $f_n(x_1,\cdots,x_n,y_1,\cdots,y_n)$ satisfy the following recurrence relation:
\begin{equation}
    f_n(x_1, \ldots, x_n, y_1, \ldots, y_n)=f_{n-1}(x_1, \ldots, x_{n-1}, y_1, \ldots, y_{n-1}) x_n+f_{n-2}(x_1, \ldots, x_{n-2}, y_1, \ldots, y_{n-2}) y_n. \label{eq:sm_fn}
\end{equation}

Let us define $f_{-1}=0,f_0=1,f_i=f_i(x_1, \ldots, x_i, y_1, \ldots, y_i), f_i^{\prime}=f_i(x_2, \ldots, x_{i+1}, y_2, \ldots, y_{i+1})$, we can readily prove Eq.~\eqref{eq:sm_fn} by induction, since:
\begin{equation}
    \begin{pmatrix}
    f_{n-1} & f_{n-2} \\
    y_1 f_{n-2}^{\prime} & y_1 f_{n-3}^{\prime}
    \end{pmatrix}\begin{pmatrix}
    x_n & 1 \\
    y_n & 0
    \end{pmatrix}=\begin{pmatrix}
    f_n & f_{n-1} \\
    y_1 f_{n-1}^{\prime} & y_1 f_{n-2}^{\prime}
    \end{pmatrix}.
\end{equation}

Applying Eq.~\eqref{eq:sm_matrix_identity} to $\text{Tr}(U^T)$, we have:
\begin{equation}
    \text{Tr}(U^T)=f_k(E-V_1,\cdots,E-V_k,-t_kt_k',-t_1t_1',\cdots,-t_{k-1}t_{k-1}')-t_kt_k'f_{k-2}(E-V_2,\cdots,E-V_{k-1},-t_1t_1',\cdots,-t_{k-2}t_{k-2}').
\end{equation}

From Eq.~\eqref{eq:sm_fn}, one finds:
\begin{align}
    f_k(E-V_1,\cdots,E-V_k,-t_kt_k',-t_1t_1',\cdots,-t_{k-1}t_{k-1}')&=\Delta_{1,k}, \nonumber \\
    f_{k-2}(E-V_2,\cdots,E-V_{k-1},-t_1t_1',\cdots,-t_{k-2}t_{k-2}')&=\Delta_{2,k-1},
\end{align}
where:
\begin{equation}
    \Delta_{i,j}=\left\{
    \begin{array}{ccl}
        \det\begin{pmatrix}
        E-V_i & 1 & &  \\
        t_{i}t_{i}' & E-V_{i+1} & 1 &  \\
        & \ddots & \ddots & 1  \\
        & &t_{j-1}t_{j-1}' & E-V_j \\
    \end{pmatrix} & & j>i-1, \\
        1 & & j=i-1, \\
        0 & & j<i-1.
    \end{array}\right.
    \label{eq:sm_Delta}
\end{equation} 

Thus, the recurrence relation~in Eq.\eqref{eq:sm_trU_detU} is written as:
\begin{equation}
    D_{(N+1)k}=(\Delta_{1,k}-t_{k}t_{k}'\Delta_{2,k-1})D_{Nk}-\prod_{j=1}^kt_{j}t_{j}'D_{(N-1)k},  \label{eq:sm_size_independent_recurrence}
\end{equation}
which gives the recurrence relation in the main text.

Defining $p_k(E)\equiv\Delta_{1,k}-t_{k}t_{k}'\Delta_{2,k-1}$ and $t\equiv \prod_{j=1}^kt_{j}t_{j}'$, then applying the recurrence method in the main text, we have the solutions of characteristic equation for recurrence relation Eq.~\eqref{eq:sm_size_independent_recurrence}: $x_\pm=\frac{1}{2}(p_k\pm\sqrt{p_k^2-4t})$. From condition (i): $|x_+|=|x_-|$, we obtain the OBC bulk spectrum of Eq.~\eqref{eq:sm_periodic_HN} satisfying the relation: 
\begin{equation}
    p_k(E)=2\sqrt{t}\cos\theta. \label{eq:sm_bulk_OBC_spectrum}
\end{equation}

For the $k=4$ case, $p_4(E)=E^4+c_3E^3+c_2E^2+c_1E+c_0$, where: 
\begin{eqnarray}
c_3&=&-(V_1+V_2+V_3+V_4), \nonumber\\
c_2&=&-T_1-T_2-T_3-T_4+V_1V_2+V_1V_3+V_2V_3+V_1V_4+V_2 V_4+V_3V_4, \nonumber\\
c_1&=&T_1(V_3+V_4)+T_2(V_1+V_4)+T_3(V_1+V_2)+T_4(V_2+V_3)-V_1V_2V_3-V_1V_2V_4-V_1V_3V_4-V_2V_3V_4, \nonumber\\
c_0&=&T_1T_3+T_2T_4-T_1V_3V_4-T_2V_1V_4-T_3V_1V_2-T_4V_2V_3,
\end{eqnarray}
and $T_i\equiv t_{i}t_{i}'$. 

For the edge spectra from condition (ii), we need to consider the initial conditions of the characteristic polynomial $D_{Nk}(E)=d_+(E)x_+(E)^N+d_-(E)x_-(E)^N$. Using $D_0=1$ and condition (ii) $d_+(E)=0$ or $d_-(E)=0$, we obtain that the edge spectrum is the set: 
\begin{equation}
    \{E|D_k(E)=x_+(E),\,|x_+(E)|<|x_-(E)|\}\cup\{E|D_k(E)=x_-(E),\,|x_+(E)|>|x_-(E)|\}.  \label{eq:sm_OBC_edge_spectrum}
\end{equation}

\section{Numerical Calculations for the OBC spectrum of the $k$-periodic Hatano-Nelson model}

It is widely known that direct diagonalization of the non-Hermitian OBC Hamiltonian matrix $H_{\text{OBC}}$ is sensitive to the matrix dimension, and accumulation of the digital errors may result in incorrect or unstable eigenvalues~\cite{Zhesen2020}. Based on the generalized Brillouin zone (GBZ) or recurrence method, we can develop numerical algorithms to resolve the issue of direct diagonalization. 

\subsection{Numerical calculation based on the Generalized Brillouin zone method}

Here, we present an algorithm applicable to single-band non-Hermitian Hamiltonians using the GBZ method~\cite{Chen2023}. The characteristic equation of non-Bloch Hamiltonian $\det[H(\beta)-E]=0$ is now written as: 
\begin{equation}
    E=H(\beta)=\sum\limits_{n=-p}^q h_n\beta^n. \label{eq:sm_single_band}
\end{equation}
The OBC spectrum $\sigma_{\text{OBC}}$ in the thermodynamic limit is given by the condition $|\beta_p(E)|=|\beta_{p+1}(E)|$, in which $\beta_p(E_{\text{OBC}})$ and $\beta_{p+1}(E_{\text{OBC}})$ constitute the GBZ. We seek for $M$ points $E_1, E_2, \cdots, E_M$ on the OBC spectrum $\sigma_{\text{OBC}}$, so we assume that:
\begin{equation}
    \beta_p(E_l)=\beta_le^{i\theta_l}, \quad \beta_{p+1}(E_l)=\beta_{l}e^{-i\theta_l}, \label{eq:sm_beta_p}
\end{equation}
where $\beta_l$ is complex and the phases $\theta_l=l\pi/(M+1)$, $l=1,2,\cdots,M$ characterize the azimuth difference between $\beta_p$ and $\beta_{p+1}$. This choice of the phase factor ensures an approximately uniform distribution of $\beta_p/\beta_{p+1}$ over the GBZ. Substituting Eq.~\eqref{eq:sm_beta_p} into Eq.~\eqref{eq:sm_single_band}, we obtain: 
\begin{eqnarray}
    E&=&H(\beta_p)=\sum\limits_{n=-p}^q h_n\beta_l^ne^{in\theta_l}, \nonumber\\
    E&=&H(\beta_{p+1})=\sum\limits_{n=-p}^q h_n\beta_l^ne^{-in\theta_l}. \label{eq:sm_E_subeq2}
\end{eqnarray}
Eliminating $E$ in Eq.~\eqref{eq:sm_E_subeq2}, we obtain the following equation for $\beta_l$:
\begin{equation}
    \sum\limits_{n=-p}^q h_n\beta_l^n\sin n\theta_l=0. \label{eq:sm_eq_beta_l}
\end{equation}

The algorithmic procedure for evaluating $\beta_l$ and thus $E=H(\beta_le^{\pm i\theta_l})$ is summarized as follows: 
\begin{enumerate}
    \item Start from $l=1$, solve Eq.~\eqref{eq:sm_eq_beta_l} and obtain $p+q$ roots $\beta_l^{(i)}$.
    \item For each $\beta_l^{(i)}$, obtain $E_l^{(i)}$ from Eq.~\eqref{eq:sm_E_subeq2}. Let $E=E_l^{(i)}$ in Eq.~\eqref{eq:sm_single_band}, solve this equation to determine the roots sorted by their moduli: $|\beta_1(E_l^{(i)})|\leq|\beta_2(E_l^{(i)})|\leq\cdots\leq|\beta_{p+q}(E_l^{(i)})|$.
    \item Check whether $|\beta_l^{(i)}|=|\beta_p(E_l^{(i)})|=|\beta_{p+1}(E_l^{(i)})|$. If true, then $E_l^{(i)}\in \sigma_{\text{OBC}}$ and add $E_l^{(i)}$ to the spectrum list; otherwise $E_l^{(i)}\notin \sigma_{\text{OBC}}$. Return to step 2 until all $\beta_l^{(i)}$ have been checked.
    \item $l\rightarrow l+1$ and back to step 1 until $l=M$.
\end{enumerate}

\subsection{Numerical calculation based on the recurrence relation method} \label{appendix:numerical_recurrence_method}

From Appendix~\ref{appendix:1d_nn_hopping}, we know that OBC bulk spectrum of non-Hermitian models in Eq.~\eqref{eq:sm_periodic_HN} is given by $p_k(E)=\Delta_{1,k}-T_k\Delta_{2,k-1}=2\sqrt{t}\cos\theta$, where $\Delta_{i,j}$ is given in Eq.~\eqref{eq:sm_Delta}, and $T_j\equiv t_{j}t_{j}'$, $t=\prod_{j=1}^k T_j$. To obtain the OBC spectrum, we have to derive the coefficients in the polynomial $p_k(E)=E^k+C_1E^{k-1}+\cdots+C_{k-1}E+C_k$. $\Delta_{1,k}$ is a polynomial of degree $k$ and satisfies the recurrence relation:
\begin{equation}
    \Delta_{1,n}=(E-V_n)\Delta_{1,n-1}-T_{n-1}\Delta_{1,n-2},\quad \Delta_{1,0}=1,\quad \Delta_{1,1}=E-V_1, \quad 2\leq n\leq k. \label{eq:sm_Delta_recurrence}
\end{equation}

We identify the polynomial coefficients:
\begin{equation}
    \Delta_{1,k}(E)=E^k+c_1E^{k-1}+\cdots+c_{k-1}E+c_k, \quad \Delta_{1,n}(E)=E^n+c_1^{(n)}E^{n-1}+\cdots+c_{n-1}^{(n)}E+c_n^{(n)}, \quad 1\leq n\leq k
\end{equation}
where $c_i^{(k)}=c_i$. Using the relation Eq.~\eqref{eq:sm_Delta_recurrence}, we can calculate the coefficients $c_i^{(j)}$ recursively. For $n\geq 3$: 
\begin{eqnarray}
    c_1^{(n)}&=&c_1^{(n-1)}-V_n, \nonumber\\
    c_2^{(n)}&=&c_2^{(n-1)}-V_nc_1^{(n-1)}-T_{n-1}, \nonumber\\
    c_j^{(n)}&=&c_j^{(n-1)}-V_nc_{j-1}^{(n-1)}-T_{n-1}c_{j-2}^{(n-2)}, \quad 3\leq j\leq n-1, \nonumber\\
    c_n^{(n)}&=&-V_nc_{n-1}^{(n-1)}-T_{n-1}c_{n-2}^{(n-2)}, 
\end{eqnarray}
with initial conditions $c_1^{(1)}=-V_1$, $c_1^{(2)}=-(V_1+V_2)$, $c_2^{(2)}=V_1V_2-T_1$. Similarly,  $\Delta_{2,k-1}$ can also be recursively determined. We may write $\Delta_{2,k-1}$ and $\Delta_{2,n}$ as:
\begin{equation}
    \Delta_{2,k-1}(E)=E^{k-2}+\tilde{c}_1E^{k-3}+\cdots+\tilde{c}_{k-3}E+\tilde{c}_{k-2}, \quad \Delta_{2,n}(E)=E^{n-1}+\tilde{c}_1^{(n)}E^{n-2}+\cdots+\tilde{c}_{n-2}^{(n)}E+\tilde{c}_{n-1}^{(n)}, \, 2\leq n\leq k-1. 
\end{equation}

For $n\geq 4$: 
\begin{eqnarray}
    \tilde{c}_1^{(n)}&=&\tilde{c}_1^{(n-1)}-V_n, \nonumber\\
    \tilde{c}_2^{(n)}&=&\tilde{c}_2^{(n-1)}-V_n\tilde{c}_1^{(n-1)}-T_{n-1}, \nonumber\\
    \tilde{c}_j^{(n)}&=&\tilde{c}_j^{(n-1)}-V_n\tilde{c}_{j-1}^{(n-1)}-T_{n-1}\tilde{c}_{j-2}^{(n-2)}, \quad 3\leq j\leq n-2, \nonumber\\
    \tilde{c}_{n-1}^{(n)}&=&-V_n\tilde{c}_{n-2}^{(n-1)}-T_{n-1}\tilde{c}_{n-3}^{(n-2)}, 
\end{eqnarray}
with initial conditions $\tilde{c}_1^{(2)}=-V_2$, $\tilde{c}_1^{(3)}=-(V_2+V_3)$, $\tilde{c}_2^{(3)}=V_2V_3-T_2$. The coefficients of polynomial $p_k(E)=E^k+C_1E^{k-1}+\cdots+C_{k-1}E+C_k$ are given by: 
\begin{equation}
    C_1=c_1, \quad C_2=c_2-T_k, \quad C_{i}=c_i-T_k\tilde{c}_{i-2}, \quad 3\leq i\leq k. 
\end{equation}

Then, the OBC bulk spectrum of Eq.~\eqref{eq:sm_periodic_HN} in the thermodynamic limit is determined from the equation: 
\begin{equation}
    E^k+C_1E^{k-1}+\cdots+C_{k-1}E+C_k=2\sqrt{t}\cos\theta, \quad \theta\in[0,2\pi].  \label{eq:sm_OBC_spectrum}
\end{equation}

For a finite system with $Nk$ lattice sites, we can approximate the finite-size bulk OBC spectrum by properly choosing $N$ $\theta$ values in Eq.~\eqref{eq:sm_OBC_spectrum}. A convenient choice is: $\theta_i=i\pi/(N+1), i=1,2,\cdots,N$. 

The OBC edge spectrum which originates from condition (ii) in the recurrence method can be determined from Eq.~\eqref{eq:sm_OBC_edge_spectrum}, where $x_\pm(E)=\frac{1}{2}(p_k(E)\pm\sqrt{p_k(E)^2-4t})$ and $D_k$ is just $\Delta_{1,k}$ obtained in the previous text. Solutions of the following polynomial equation: 
\begin{equation}
    \Delta_{1,k}(E)^2-\Delta_{1,k}(E)p_k(E)+t=0,
\end{equation}
which satisfy either of the two conditions: (i). $\Delta_{1,k}(E)=x_+(E)$, $|x_+(E)|<|x_-(E)|$ or (ii). $\Delta_{1,k}(E)=x_-(E)$, $|x_-(E)|<|x_+(E)|$ belong to the OBC edge spectrum of Eq.~\eqref{eq:sm_periodic_HN}.

\section{Non-Hermitian Hofstadter butterfly on the triangular lattice}

To demonstrate the power of the recurrence method, we consider a two-dimensional non-Hermitian Hofstadter model on a triangular lattice, which is beyond the scope of generalized Brillouin zone theory and similarity transformation. The Hamiltonian of the non-reciprocal triangular lattice model is written as: 
\begin{align}
    H_{\text{nHH}}=&\sum\limits_{x,y}(\sqrt{(1-\delta)/(1+\delta)}c_{x+1,y}^\dagger c_{x,y}+\sqrt{(1+\delta)/(1-\delta)}c_{x-1,y}^\dagger c_{x,y}+\sqrt{(1+\delta)/(1-\delta)}c_{x,y+1}^\dagger c_{x,y} \\
    &\sqrt{(1-\delta)/(1+\delta)}c_{x,y-1}^\dagger c_{x,y}+\sqrt{(1+\delta)/(1-\delta)}c_{x+1,y-1}^\dagger c_{x,y}+\sqrt{(1-\delta)/(1+\delta)}c_{x-1,y+1}^\dagger c_{x,y}), \label{eq:sm_nH_triangle_lattice_model}
\end{align}
where $\delta$ represents the non-reciprocity. We set the convention that the product of opposite hopping amplitudes equals $1$, and $\hbar=e=1$. Note that we set our lattice vectors along the sides of the triangles. Applying a uniform magnetic field with Landau gauge $\vec{A}=(0, Bx)$, we can write the Hamiltonian in Eq.~\eqref{eq:sm_nH_triangle_lattice_model} for a particular $k_y$ as:
\begin{align}
    H_{\text{nHH}}(k_y)&=\sum\limits_x [\sqrt{(1-\delta)/(1+\delta)}c_{x+1,k_y}^\dagger c_{x,k_y}+\sqrt{(1+\delta)/(1-\delta)}c_{x-1,k_y}^\dagger c_{x,k_y}+\sqrt{(1+\delta)/(1-\delta)}e^{-iBx}e^{ik_y}c_{x,k_y}^\dagger c_{x,k_y} \notag \\
    &+\sqrt{(1-\delta)/(1+\delta)}e^{iBx}e^{-ik_y}c_{x,k_y}^\dagger c_{x,k_y}+\sqrt{(1+\delta)/(1-\delta)}e^{-ik_y}e^{iB(x+1/2)}c_{x+1,k_y}^\dagger c_{x,k_y} \notag \\
    &+\sqrt{(1-\delta)/(1+\delta)}e^{-iB(x-1/2)}e^{ik_y}c_{x-1,k_y}^\dagger c_{x,k_y}], \label{eq:sm_nH_1d_Hofstadter}
\end{align}
where the magnetic phase factors along the three sides of a triangular plaquette are given by
\begin{eqnarray}
    e^{i\phi_1}&=&\exp\Big(-i\int_{(x,y)}^{(x,y+1)}\Vec{A}\cdot d\Vec{l}\Big)=e^{-iBx}, \nonumber\\
    e^{i\phi_2}&=&\exp\Big(-i\int_{(x,y+1)}^{(x+1,y)}\Vec{A}\cdot d\Vec{l}\Big)=e^{iB(x+1/2)}, \nonumber\\
    e^{i\phi_3}&=&\exp\Big(-i\int_{(x+1,y)}^{(x,y)}\Vec{A}\cdot d\Vec{l}\Big)=1.
\end{eqnarray}

The Hamiltonian in Eq.~\eqref{eq:sm_nH_1d_Hofstadter} can be written as a one-dimensional non-Hermitian Hamiltonian with NN hoppings: 
\begin{equation}
    H_{\text{nHH}}(k_y)=\sum\limits_x(t_xc_{x+1,k_y}^\dagger c_{x,k_y}+t_x'c_{x,k_y}^\dagger c_{x+1,k_y}+V_xc_{x,k_y}^\dagger c_{x,k_y}), \label{eq:sm_nH_1d_q_periodic}
\end{equation}
where the NN hoppings $t_x$, $t_x'$, and the onsite potentials $V_x$ are given by:
\begin{eqnarray}
    t_x&=&\sqrt{(1-\delta)/(1+\delta)}+\sqrt{(1+\delta)/(1-\delta)}e^{-ik_y}e^{iB(x+1/2)}, \nonumber\\
    t_x'&=&\sqrt{(1+\delta)/(1-\delta)}+\sqrt{(1-\delta)/(1+\delta)}e^{ik_y}e^{-iB(x+1/2)}, \nonumber\\
    V_x&=&\sqrt{(1+\delta)/(1-\delta)}\exp[-i(Bx-k_y)]+\sqrt{(1-\delta)/(1+\delta)}\exp[i(Bx-k_y)].
\end{eqnarray}

When the magnetic flux is commensurate, i.e., $B=2\pi (p/q)$ so that the flux per triangle plaquette is $\Phi=\pi (p/q)$, the Hamiltonian in Eq.~\eqref{eq:sm_nH_1d_q_periodic} is a one-dimensional $q$-periodic model with NN hopping $t_x=t_{x+q}$, $t_x'=t_{x+q}'$, and $V_x=V_{x+q}$. The bulk energy spectra under OBC are obtained in Appendix~\ref{appendix:1d_nn_hopping} as the solutions of the following polynomials: 
\begin{equation}
    p_q(E)=2\sqrt{T}\cos\theta, \label{eq:sm_bulk_OBC_spectrum_Hofstadter}
\end{equation}
where $T\equiv \prod_{x=1}^qt_xt_x'$ and $p_q(E)$ is a $q-$th degree polynomial whose coefficients can be recursively determined as shown in Appendix \ref{appendix:numerical_recurrence_method}. To obtain bulk spectra with lattice size $Nq$, we can choose $\theta_i$ as $\theta_i=i\pi/(N+1), i=1, 2, \cdots, N$, then the bulk spectra are given by solving a series of $q$-th polynomials with the variation of constant terms only: $p_q(E)-2\sqrt{T}\cos\theta_i=0$. By applying the recurrence method, we resolve the precision issues in non-Hermitian numerical diagonalization and significantly reduce the time complexity since the period $q$ is usually much smaller than the number of unit cells $N$. 

The edge spectra under OBC are obtained from the equation:  
\begin{equation}
    D_q(E)^2-D_q(E)p_q(E)+T=0,
\end{equation}
with either of the following two constraints satisfied: (i). $D_q(E)=z_+(E)$, $|z_+(E)|<|z_-(E)|$; (ii). $D_q(E)=z_-(E)$, $|z_-(E)|<|z_+(E)|$, where $D_q(E)$ is the determinant of the first $q$-size unit cell and $z_{\pm}(E)=\frac{1}{2}(p_q(E)\pm\sqrt{p_q(E)^2-4T})$. 

The non-reciprocity $\delta$ in the Hamiltonian in Eq.~\eqref{eq:sm_nH_1d_Hofstadter} can be considered as the effect of an imaginary magnetic field. The magnetic phase of the counterclockwise closed path along the edges of the triangular plaquette is: 
\begin{equation}
    e^{i\phi}=\exp\Big(-i\oint \vec{A}\cdot d\vec{l}\Big)=e^{i(\phi_1+\phi_2+\phi_3)}=e^{iB/2}. \label{eq:sm_magnetic_phase_factor}
\end{equation}
We find that when the magnetic field is ``imaginary": $B=iB_i$, the magnetic phase factor in Eq.~\eqref{eq:sm_magnetic_phase_factor} becomes a scaling factor $e^{-B_i/2}$, which is identical to non-reciprocity. Hence, the two-dimensional non-reciprocal triangular lattice model in Eq.~\eqref{eq:sm_nH_1d_Hofstadter} under the magnetic field 
$B$ is equivalent to a Hermitian model under a complex magnetic field $B_c=B_r+iB_i$ if $B_r=B$ and:
\begin{equation}
    \frac{(1-\delta)^{3/2}}{(1+\delta)^{3/2}}=e^{-B_i/2}. 
\end{equation}

While $B_r$ has to be commensurate to facilitate our calculations, the complex magnetic field $B_c$ endows us with more degrees of freedom, where parallel comparisons can be drawn with the complex semiclassical theory. For instance, we can fix the modulus of the complex magnetic field $|B_c|$ and vary its phase angle $\theta$: $B_c=|B_c|(\cos\theta+i\sin\theta)$. Given a specific, commensurate applied magnetic field $B=2\pi(p/q)$, the non-reciprocity parameter $\delta$ satisfies: 
\begin{equation}
    \Big(3\ln\frac{1+\delta}{1-\delta}\Big)^2+(2\pi p/q)^2=|B_c|^2, \label{eq:bcdelta}
\end{equation}
which yields $\theta=\arccos\dfrac{2\pi p}{q|B_c|}$ and $\delta=\tanh\dfrac{|B_c|\sin\theta}{6}$. For example, we present in Fig.~\ref{fig:supp_constant_Bc} the results for $|B_c|=6$, which would otherwise have been incommensurate with the lattice.

\begin{figure}[htb]
    \centering
    \includegraphics[width=0.8\linewidth]{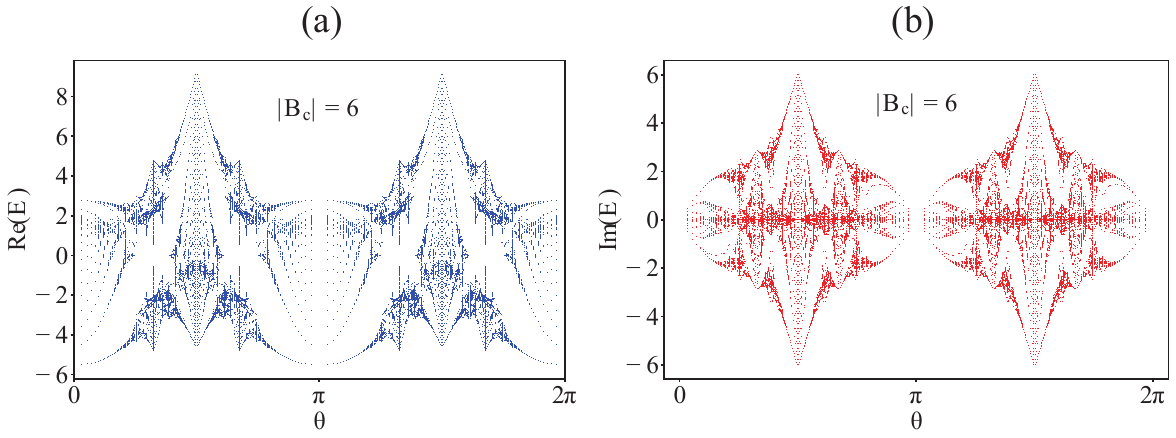}
    \caption{Our recurrence method gives the real (a) and imaginary (b) parts of the OBC spectra of the non-Hermitian Hofstadter model on a triangular lattice with a complex magnetic field $B_c=|B_c|\exp(i\theta)$ of constant modulus $|B_c|=6$ and variable phase angle $\theta$. For each $\theta$, we have applied a commensurate magnetic field $B_r=2\pi p/q$ with the corresponding non-reciprocity $\delta$ as in Eq.~\eqref{eq:bcdelta}. }
    \label{fig:supp_constant_Bc}
\end{figure}

\section{Complex Semiclassical Theory}

In this section, we derive the Landau fan structure near the band top, namely, the Landau level scaling: 
\begin{equation}
    \sqrt{1-\delta^2}E_n=6-3B(n+1/2), \quad n=0, 1, 2, \cdots, \label{eq:sm_Landau_levels}
\end{equation}
for the model in Eq.~\eqref{eq:sm_nH_triangle_lattice_model} in a magnetic field $B=2\pi p/q$. For simplicity, we consider a non-Hermitian Hofstadter model with clockwise hopping $1+\delta$ and counterclockwise hopping $1-\delta$ instead, whose spectrum equals that of Eq.~\eqref{eq:sm_nH_triangle_lattice_model} after the $\sqrt{1-\delta^2}$ factor. In the complex semiclassical theory, we may also treat the non-reciprocity as a non-Hermitian energy dispersion or an imaginary magnetic field. Here, we take the former route for an approximate analytical solution. 

Without the applied magnetic field, the energy spectrum in the momentum space is: 
\begin{equation}
    E=2\cos k_x+2i\delta\sin k_x +2\cos\frac{\sqrt{3}k_y-k_x}{2}+2i\delta\sin\frac{\sqrt{3}k_y-k_x}{2}+2\cos\frac{-\sqrt{3}k_y-k_x}{2}+2i\delta\sin\frac{-\sqrt{3}k_y-k_x}{2}, \label{eq:sm_E_kx_ky}
\end{equation}
which we expand to the quadratic order around the band top $k_x=k_y=0$: 
\begin{equation}
    E\approx 2-k_x^2+2-2\left(\frac{\sqrt{3}k_y-k_x}{2}\right)^2+2-2\left(\frac{-\sqrt{3}k_y-k_x}{2}\right)^2=6-\frac{3}{2}(k_x^2+k_y^2). \label{eq:sm_E_kx_ky_expansion}
\end{equation}

In a perpendicular magnetic field, we have $k_y\rightarrow k_y-Bx$, hence:
\begin{equation}
    E\approx 6-\frac{3}{2}\left[k_x^2+(Bx-k_y)^2\right] = 6-3B(n+1/2), \quad n=0, 1, 2,\cdots, \label{eq:sm_E_kx_x}
\end{equation}
which is exactly solvable, similar to a harmonic oscillator. Putting back the $\sqrt{1-\delta^2}$ due to the convention difference of the hopping parameters, we arrive at the Landau fan structure $\sqrt{1-\delta^2}E_n=6-3B(n+1/2)$ we mentioned in the main text. 

Such analytical approximation is also available to the band bottom. Further away from the band top or bottom, the approximation becomes less precise due to the increasing weights of the higher-order terms, where semiclassical solutions are still available numerically through the quantization condition $\oint \boldsymbol{p}\cdot d\boldsymbol{r}=(n+1/2)h$ upon the complex orbits \cite{Yang2024}.

\section{Durand-Kerner Method}

By applying the recurrence method, we transform the problem of obtaining OBC spectra to the problem of solving a sequence of polynomials in Eq.~\eqref{eq:sm_OBC_spectrum}. Here, we use a numerical method called Durand-Kerner Method, widely used to simultaneously calculate all the complex roots of a univariate polynomial. Given a univariate polynomial $P(x)$ with complex coefficients:
\begin{equation}
    P(x)=a_0x^n+a_1x^{n-1}+\cdots+a_{n-1}x+a_n, \quad a_0\neq 0. 
\end{equation}

In the Durand-Kerner method, we start from the distinct initial trial solutions $x_1^{(0)},x_2^{(0)},\cdots,x_n^{(0)}$ and iteratively improve the trial solutions via the formula: 
\begin{equation}
    x_j^{(i+1)}=x_j^{(i)}-\frac{P(x_j^{(i)})}{a_0\prod_{k=1,j\neq k}^n(x_j^{(i)}-x_k^{(i)})}, \quad j=1, 2, \cdots, n.
\end{equation}

In practice, after a sufficient number of iteration steps, the approximations $x_j^{(N)}$ converge to the roots of $P(x)$ with quadratic convergence rate if the trial solutions $x_j^{(0)}$ are properly chosen. Here, we adopt the trial solutions as $x_j^{(0)}=(0.4+0.9i)^j$.

\end{document}